\documentclass[preprint,
superscriptaddress,
 amsmath,amssymb, aps, prx
]{revtex4-2}

\usepackage{hyperref}
\usepackage{graphicx}
\usepackage{dcolumn}
\usepackage{bm}
\usepackage{amsmath, amssymb, amsthm}
\usepackage{stmaryrd} 
\usepackage{mathrsfs}
\usepackage{cleveref}
\usepackage{braket}
\usepackage{subfig}
\usepackage{xcolor}

\newtheorem{theorem}{Theorem}[section]
\newtheorem{lemma}[theorem]{Lemma}

\newtheorem{definition}[theorem]{Definition}
\newtheorem{proposition}[theorem]{Proposition}

\newtheorem{assumption}[theorem]{Assumption}
\newtheorem{remark}{Remark}

\def\RR{\mathbb{R}}

\def\CC{\mathbb{C}}

\def\x{U}

\def\target{U_f}
\def\z{M}

\def\tr{\operatorname{Tr}}
\def\iu{\operatorname{i}}
\def\RE{\operatorname{Re}}

\def\d{\mathrm{d}}

\newcommand{\hnorm}[1]{\vert #1 \vert}

\begin{document}
\title{Fast and Smooth Quantum Unitary Control of High-Dimensional Spin-Qudits via a Shooting Technique}
\date{\today}

\author{Paul-Louis Etienney}
\email{paul-louis.etienney@ipcms.unistra.fr}
\affiliation{Université de Strasbourg, CNRS, Institut de Physique et Chimie des Matériaux de Strasbourg, UMR 7504, 67000 Strasbourg, France}
\author{Denis Jankovi{\'{c}}}
\email{denis.jankovic@qns.science}
\affiliation{Center for Quantum Nanoscience, Institute for Basic Science, Seoul 03760, Republic of Korea and Ewha Womans University, Seoul 03760, Republic of Korea}
\author{Killian Lutz}
\email{killian.lutz@inria.fr}
\affiliation{Université de Strasbourg, CNRS, INRIA, Institut de Recherche Mathématique Avancée, UMR 7501, 67000 Strasbourg, France}
\author{Jean-Gabriel Hartmann}
\email{jean-gabriel.hartmann@ipcms.unistra.fr}
\affiliation{Université de Strasbourg, CNRS, Institut de Chimie, UMR 7177, F-67000 Strasbourg, France}
\author{Emmanuel Franck}
\affiliation{Université de Strasbourg, CNRS, INRIA, Institut de Recherche Mathématique Avancée, UMR 7501, 67000 Strasbourg, France}
\author{Paul-Antoine Hervieux}
\email{paul-antoine.hervieux@ipcms.unistra.fr}
\affiliation{Université de Strasbourg, CNRS, Institut de Physique et Chimie des Matériaux de Strasbourg, UMR 7504, 67000 Strasbourg, France}

\begin{abstract}
High-fidelity quantum control is a cornerstone of scalable quantum technologies. We introduce a shooting-based optimization framework that generates smooth, experimentally realistic control pulses for implementing quantum gates in discrete quantum systems. Through numerical simulations on realistic architectures inspired by single-molecule magnets, we demonstrate that our method efficiently decomposes target quantum operations into electric pulse sequences while outperforming the widely used GRAPE algorithm. 
By exploiting the geometry of the unitary group, our approach produces hardware specific control pulses that are both faster and smoother, with performance gains that become increasingly significant as the dimensionality of the quantum system grows.
\end{abstract}
\maketitle

\tableofcontents
\section{Introduction}

Quantum optimal control provides a powerful framework for steering quantum systems toward specified target states and implementing desired operations (quantum gates) with high fidelity, while minimizing the operation time, accounting for physical constraints or environmental noise~\cite{Quantum_optimal_control_review, Quantum_optimal_control_review2}. 
To implement an operation, the corresponding quantum gate has to be unravelled into a sequence of electromagnetic control pulses sent to the system.
Multiple algorithms have been designed to address this issue, most notably the widely used GRadient Ascent Pulse Engineering (GRAPE) \cite{GRAPE_original} and methods derived from the work of Krotov~\cite{Krotov, Krotov_python, Le2026}.
In this work, we present MAGICARP, a novel quantum optimal control algorithm combining natural gradient descent~\cite{nurbekyan2023efficient} and a shooting technique~\cite{shooting_approach, A_time-parallel_multiple-shooting_method}, and benchmark it against GRAPE.


We will be using molecular spin qudits based on single-molecule magnets (SMMs) as a representative quantum-computing platform. Indeed, nuclear spins of SMMs based on lanthanides (Ln) have experienced a surge of scientific interest and technical development in recent years~\cite{Moreno-Pineda2021}. These systems possess a number of favorable properties for quantum information processing (QIP): the nuclear spin states are highly stable, exhibiting long coherence times that are improved due to the environmental isolation afforded by the molecular ligand; the $4f$ valence electrons in the triply-ionized bound state provide useful means of electric control~\cite{Thiele2014}; and the high-dimensional Hilbert space (qudit) of the individually addressable hyperfine levels enables novel computing
applications~\cite{godfr2017}. In this study, we focus on terbium ions sandwiched between phthalocyanine (Pc) ligand layers, commonly referred to in the literature as “deckers”~\cite{SMMs_in_spintronics_plus_history}, which have been extensively investigated experimentally. Furthermore, working with multi-level quantum systems with dimensions larger than two has been shown to have some benefits over qubit systems, such as~\cite{Moreno-Pineda2020}: a reduced number of elementary gates to perform operations; lower error rates compared with qubits; more complex gates can be applied to a single qudit; entanglement and superposition can be achieved with smaller clusters of processing units; higher information density per processing unit; and, subsequently,  increased data rates of communication channels. Considering that these individual hyperfine states can be addressed and manipulated using microwave electric pulses, the latter can be shaped using optimal control techniques for quantum optimization and will enable faster logical gates. However, as the dimension of the quantum state space increases, one of the major challenges is the growing amount of time required to perform operations. It is therefore important to find algorithms that can perform this task as quickly as possible. Furthermore, from an experimental standpoint, signals with the least abrupt temporal variations should be preferred. Thus, developing algorithms that generate protocols with smooth control signals is also a key challenge. 



We benchmark our shooting-based approach against GRAPE on two physically relevant molecular systems, the double- and triple-decker compounds TbPc$_2$ and Tb$_2$Pc$_3$, characterized by Hilbert-space dimensions of 4 and 16, respectively. We chose to numerically compare the results of our method to the GRAPE algorithm, since both can be used without assuming specialized knowledge on the system and implement gradient-based optimisations at their cores. Our numerical results demonstrate that this shooting method achieves the desired fidelity threshold while providing faster execution times (gate times). The effects of a decoherence channel on the system are all the more mitigated by this figure of merit, especially in large qudit systems~\cite{Jankovic_Nature, Hartmann2025nonlinearityof}.

More generally, the ideas developed here could in principle be adapted to other quantum systems with an addressable finite-dimensional Hilbert space~\cite{Quantum_optimal_control_review2}, including nitrogen-vacancy centers in diamond~\cite{diamond_centers}, NMR systems~\cite{NMR}, and trapped ions~\cite{trapped_ions}.

This paper is organized as follows. In~\cref{sec:intro_and_pres}, we introduce three pulse implementation techniques used in the numerical comparison and explain how we compare them. More precisely, in~\cref{subsec:other_methods} we present two existing methods, the Givens rotation decomposition (GRD) and the GRAPE algorithm, while in~\cref{subsec:Magicarp_pres} we introduce our proposed method MAGICARP for the parametrization of controls.
We will compare these methods in~\cref{sec:Numerical_implementation_on_SMM} using the $\text{TbPc}_2$ and $\text{Tb}_2\text{Pc}_3$ molecules presented in~\cref{subsec:SMM} and discuss the numerical results in~\cref{subsec:Magicarp_results}.

\section{Methodology}~\label{sec:intro_and_pres}
\subsection{Reference methods for control pulse optimization}~\label{subsec:other_methods}
In several platforms, 
quantum gates are realized by control pulses which are generated by an arbitrary waveform generator (AWG). Such a device can accurately control several parameters for each transition $n$ and time step $i$:
a frequency $\omega$ and phase $\theta$, the amplitude of the pulse $A$ and the step duration $\tau$.
In this setting and denoting by $\Theta$ the Heaviside function, the pulse function $S(t)$ is represented by:
\begin{align}\label{eq:pulse}
    S(t) = \sum_{n, i}^{}\Theta(t-t_{n, i}) \Theta(t-t_{n, i} - \tau_{n, i}) A_{n, i} \cos(\omega_{n,i} t - \theta_{n, i}).
\end{align}
Consider a simple two-level system, with a drift Hamiltonian $H_0$ describing a precession around the $z$-axis at frequency $\omega_{\text{prec}}$. We represent the AWG effect by a monochromatic electromagnetic field $\overrightarrow{E}$ 
(in a direction orthogonal to the precession, the $x$-axis for instance) interacting with the electric dipole moment $\overrightarrow{d}$ of the system~\cite{godfr2017}. This interaction can be described by a specific phase $\theta$, a Rabi amplitude $\Omega_{\text{Rabi}}$ for a frequency $\omega$ using the following Hamiltonian:
\begin{align}\label{eq:Hamiltonian_btw_2_levels}
    H &= H_0 - \overrightarrow{d} \cdot \overrightarrow{E}\\
    &= \frac{\hbar\omega_{\text{prec}}}{2} \sigma_z + 2\hbar\Omega_{\text{Rabi}} \cos(\omega t - \theta) \sigma_x .\notag
\end{align}
By choosing a frame rotating at frequency $\omega_{\text{rot}}$, this Hamiltonian rewrites
\begin{align}\label{eq:labframeH}
    H' &= \hbar\frac{(\omega_{\text{prec}} - \omega_{\text{rot}})}{2}\sigma_z + \hbar\Omega_{\text{Rabi}} \left[(\cos((\omega + \omega_{\text{rot}})t - \theta) + \cos((\omega - \omega_{\text{rot}})t - \theta)) \sigma_x\right.\\
    &+ \left.(\sin((\omega + \omega_{\text{rot}})t - \theta) - \sin((\omega - \omega_{\text{rot}})t - \theta)) \sigma_y\right]. \notag
\end{align}
Usually three simplifications are made: (i) choosing $\omega_{\text{rot}}=\omega_{\text{prec}}$ to remove the drift term which places us in the interaction picture, (ii) choosing $\omega=\omega_{\text{rot}}$, which places us in the rotating frame, and (iii) applying the Rotating Wave Approximation (RWA) to neglect fast oscillating terms. The Hamiltonian then simplifies to
\begin{align}\label{eq:godfrin}
    H = \hbar\Omega_{\text{Rabi}} \left[\cos(\theta) \sigma_x - \sin(\theta) \sigma_y\right].
\end{align}
In particular $\omega=\omega_{\text{prec}}$ and so the pulse is on resonance with the transition. In the remainder of this paper, we place ourselves in a situation where we can apply~\cref{eq:godfrin} between any two energy levels of the Hilbert space we are working with as long as they are considered ``connected'', that is as long as such an interaction is experimentally allowed. We can control both $\Omega_{\text{Rabi}}$ and $\theta$ through an AWG.\ 

\subsubsection{Linear algebra: Givens Rotation Decomposition}\label{sec:GRD}
The Givens Rotation Decomposition~\cite{Givens1958} (GRD) is a method that eliminates subdiagonal elements of a matrix by sequentially applying rotation matrices operating on some two-dimensional subspace to decompose any matrix into a so-called QR decompostion~\cite{QR_decompostion} with $Q$ unitary and $R$ upper triangular. For a unitary matrix this amounts to decomposing the matrix into $Q$ unitary and $R$ diagonal. 
A Givens Rotation (GR), is defined by two real parameters $\theta$ and $\phi$ as
\begin{align}\label{eq:GRD}
    \text{GR}(\phi, \theta) = e^{-i(\cos(\theta)\sigma_x + \sin(\theta)\sigma_y)\phi}&=
    \begin{bmatrix}
    \cos(\phi) & -ie^{-i\theta}\sin(\phi) \\
    -ie^{i\theta}\sin(\phi) & \cos(\phi) 
    \end{bmatrix},
\end{align}
and is embedded in a larger space to zero out one element of a column (see the first step in~\cref{eq:GRD_zero_one_number}). Successive rotations then zero out all subdiagonal elements of a given column (see the second step in~\cref{eq:GRD_second_step}). We can proceed similarly with the next column, without affecting the zeros on the previous columns since the GR does not modify them. Given a unitary matrix $U$, these steps can be schematically expressed as follows, where the generic coefficients $a_{i,j}$, $b_{i,j}$, $c_{i,j}$ represent the modified matrix elements at different stages of the method:
\begin{align}\label{eq:GRD_zero_one_number}
\operatorname{GR}_1U &
=
\begin{pmatrix}
a_{1,1} & a_{1,2} & \cdots & a_{1,N} \\
a_{2,1} & a_{2,2} & \cdots & a_{2,N} \\
\vdots  & \vdots  & \ddots & \vdots \\
0 & a_{N,2} & \cdots & a_{N,N}
\end{pmatrix}
\\
\vdots \notag \\ 
\operatorname{GR}_{d-1} \ldots \operatorname{GR}_2 \operatorname{GR}_1 U& =
\begin{pmatrix}
b_{1,1} & b_{1,2} & \cdots & b_{1,N} \\
0 & b_{2,2} & \cdots & b_{2,N} \\
\vdots  & \vdots  & \ddots & \vdots \\
0 & b_{N,2} & \cdots & b_{N,N}
\end{pmatrix} \label{eq:GRD_second_step}
\\
\vdots \notag \\ 
\operatorname{GR}_{d(d-1)/2} \ldots \operatorname{GR}_2 \operatorname{GR}_1 U& =
\begin{pmatrix}
c_{1,1} & c_{1,2} & \cdots & c_{1,N} \\
0 & c_{2, 2} & \cdots & c_{2,N} \\
\vdots  & \vdots  & \ddots & \vdots \\
0 & 0 & \cdots & c_{N,N}
\end{pmatrix}= \begin{pmatrix}
e^{-i\xi_1} & 0 & \cdots & 0 \\
0 & e^{-i \xi_2} & \cdots & 0 \\
\vdots  & \vdots  & \ddots & \vdots \\
0 & 0 & \cdots & e^{-i\xi_N}
\end{pmatrix}.\label{eq:localphases}
\end{align}

In the end, all subdiagonal elements are zeroed out. Since the GRs are unitary and the product of unitary matrices remains unitary, the final matrix in~\cref{eq:localphases} is unitary. We deduce that coefficients above the diagonal are also null, leaving only diagonal elements of modulus one. 
We could make the $z$ rotations needed to remove the diagonal elements $e^{-i\xi_k}$ using some combinations of $x$ and $y$ rotations (e.g., via an Euler-angle decomposition) and would end up with the identity matrix in the right-hand side of~\cref{eq:localphases}. This means that the applied operations have constructed $U^\dagger$, up to a global phase. Consequently, starting from the conjugate transpose of a target unitary matrix $U$ in the GRD, we can construct $U$.

An important observation in~\cref{eq:GRD} is that, when the Hamiltonian in~\cref{eq:godfrin} is used to generate the GR matrix, the pulse amplitude $\Omega$ and duration $t$ are not independent; they enter only through their product, which defines the angle $\phi=\Omega t$. Put differently, a longer pulse of smaller amplitude is equivalent to a shorter pulse of larger amplitude as long as their products are equal. This would not be the case if we were to consider a drift term. However, in practice the power output of the experimental devices must not exceed a certain threshold, otherwise the pulses could damage the molecule or induce undesired transitions between energy levels~\cite{Optimal_control_ex_luis}. Moreover, mathematically, a higher amplitude pulse would induce a non-negligible Bloch-Siegert effect~\cite{Bloch1940} due to the fast-oscillating terms neglected in the RWA, which would make~\cref{eq:godfrin} not valid anymore. It is therefore not possible to generate quantum gates arbitrarily fast simply by using pulses of larger amplitudes.

The GRD algorithm relies only on linear algebra, is interpretable, and terminates in a finite number of steps but these steps must be applied sequentially. On the other hand, the techniques relying on optimization generally do not terminate in a finite number of steps and could be attracted to local extrema, but they are able to handle multi-chromatic pulses and thereby achieve significantly faster implementations of gates.

\subsubsection{Optimisation: GRadient Ascent Pulse Engineering (GRAPE)}\label{subsec:GRAPE}
The GRAPE method decomposes a quantum operation into a sequence of multi-chromatic pulses using gradient descent~\cite{GRAPE_original}. For the numerical comparison in~\cref{subsec:Magicarp_results} we will use the GRAPE method from the QuTiP library, since this package is widely used~\cite{QuTiP}. Assume that one can model a physical system by a Hamiltonian composed of $N_{\text{ctrls}}$ available control Hamiltonians $H_k$ controlled by some continuous and dimensionless functions $u_k(t)$ with an amplitude $\Omega$. The objective is to reach a target gate $U_f$
by constructing a gate $U$ which is as close as possible to $U_f$ and defined 
by \footnote{Here, $\mathcal{T}$ denotes the time-ordering operator, which rearranges products so that later times appear to the left; for instance, for small $\delta t$ one has $\mathcal{T}\left\{e^{-\frac{i}{\hbar}H(t_a)\delta t}\,e^{-\frac{i}{\hbar}H(t_b)\delta t}\right\}=e^{-\frac{i}{\hbar}H(\max\{t_a,t_b\})\delta t}\,e^{-\frac{i}{\hbar}H(\min\{t_a,t_b\})\delta t}$. This is needed whenever $[H(t),H(t')]\neq 0$, in which case the exponential in~\cref{eq:Hkuk} is understood as a (Dyson) time-ordered exponential.}:
\begin{align}\label{eq:Hkuk}
    H(t) &=\hbar\Omega\sum_{k=1}^{N_{\text{ctrls}}}u_k(t)H_k \;,\\
    U &= \mathcal{T}e^{-\frac{i}{\hbar}\int_{t=0}^{T}H(t)dt}\notag \;.
\end{align} 
We assume that there is no drift Hamiltonian by writing~\cref{eq:Hkuk} without such a term. Since determining optimal controls in closed-form is usually out of reach, we choose a discretization scheme and numerically optimize discrete control functions. In the GRAPE method, control functions are replaced by piecewise constant functions: for each piece, the duration is fixed to $\Delta t$ but the amplitudes are optimization variables. The target gate $U_f$ has to be reached by $U$ which is the combination of $N$ gates $U_j$ which are given by
\begin{align}\label{eq:Hj}
    H(t_j\leq t<t_{j+1})&=\hbar\Omega \sum_{k=1}^{N_{\text{ctrls}}}u_{j,k}H_k,\\
    U_j&=e^{-\frac{i}{\hbar}\Delta t H(t_j\leq t<t_{j+1})}\notag ,\\
    U &= U_N U_{N-1} \hdots U_2 U_1.\notag 
\end{align} 
In a Hilbert space of dimension $d$, the cost function to be minimized reflects the error between $U$ and the target gate, up to a global phase. With $\mathbf{u} = \{u_{j,k}\}_{j=1,\dots,N}^{k=1,\dots,N_{\text{ctrls}}}$, a common cost function is
\begin{equation}
	\text{Cost}(\mathbf{u}) = 1 - \frac{1}{d}\left|\operatorname{Tr}(U_f^\dagger U)\right|\;,
\end{equation}
and referred to as infidelity, following~\cite{QuTiP}.
To update the control terms $u_{j,k}$ different optimization algorithms could be used. For example, QuTiP~\cite{QuTiP} uses the SciPy~\cite{SciPy} implementation of the \href{https://qutip.org/docs/4.0.2/guide/guide-control.html#the-grape-algorithm}{L-BFGS-B}~\cite{L-BFGS-B-Scipy} method.

\subsubsection{Comparing execution times}~\label{sec:GRDvsGRAPE}
To ensure a fair comparison between the execution times obtained with the
different methods considered in this work, we impose the same amplitude
constraint on all control protocols. This constraint is motivated by the
analytical GRD construction, in which each elementary control pulse is
monochromatic, has fixed physical amplitude $\Omega$, and has a
duration chosen according to the desired rotation angle. In particular,
once the angles $\theta$ and $\phi$ have been determined from~\cref{eq:GRD},
the pulse duration is chosen such that $\phi = \Omega t$ where $[\Omega] = T^{-1}$.

By contrast, optimal-control methods such as GRAPE produce discretized
multichromatic pulses. During each time step $\Delta t$, several control
components can be applied simultaneously, with amplitudes that are varied
independently by the optimization algorithm. In order to compare such
pulses with the GRD protocol on equal footing, each time step must
therefore be rewritten as an equivalent pulse satisfying the same total
amplitude bound.

Let $\Omega_0$ be a total
amplitude fixed to 1, with $[\Omega_0] = T^{-1}$. We rewrite the Hamiltonian in~\cref{eq:Hj} during a given time step $j$ as
\begin{align*}
    H(t_j\leq t<t_{j+1})&=\hbar\Omega_0\sum_{k=1}^{N_{\text{ctrls}}} u_{j,k} H_k=\hbar\Omega_0
    \sum_{l=1}^{N_{\mathrm{trans}}}
    \left(
        x_{j,l} \sigma_{x,l} + y_{j,l} \sigma_{y,l}
    \right),
\end{align*}
The $u_{j,k}$ are the discretized amplitude found by the optimization process, that we rewrite using $N_{\mathrm{trans}}$, the number of transition frequencies allowed by the experimental system. In the case of~\cref{eq:godfrin} each transition frequency allows for 2 dimensionless control amplitudes, $x_{j,l}$ and $y_{j,l}$, so $N_{\mathrm{trans}}=N_{\mathrm{ctrls}}/2$.
In the following we rename $H(t_j\leq t<t_{j+1})$ as $H$ and get rid of the subscript $j$ to simplify the notations, though we are still working on a given time step. For each frequency component $l$, we define the dimensionless amplitude
\begin{equation}
    \mathfrak u_l=\sqrt{x_{l}^2 + y_{l}^2},
\end{equation}
and the phase angle $\theta_l$ given by
\begin{equation}
    \cos \theta_l = \frac{x_{l}}{\mathfrak u_l},
    \qquad
    \sin \theta_l = \frac{y_{l}}{\mathfrak u_l}.
\end{equation}
The
Hamiltonian can then be written as
\begin{align}
    H
    &=
    \hbar \Omega_0
    \sum_{l=1}^{N_{\mathrm{trans}}}
    \mathfrak u_l
    \left(
        \cos (\theta_l) \, \sigma_{x,l}
        +
        \sin (\theta_l) \, \sigma_{y,l}
    \right).
\end{align}

We now introduce the total dimensionless amplitude of the multichromatic
pulse,
\begin{equation}
    \mathfrak u=\sqrt{\sum_{l=1}^{N_{\mathrm{trans}}}\mathfrak u_l^2}.
\end{equation}
This allows us to separate the total amplitude from the relative weights of
the different frequency components:
\begin{align}
    H
    &=
    \hbar\Omega_0 \mathfrak u
    \sum_{l=1}^{N_{\mathrm{trans}}}
    \frac{\mathfrak u_l}{\mathfrak u}
    \left(
        \cos (\theta_l) \, \sigma_{x,l}
        +
        \sin (\theta_l) \, \sigma_{y,l}
    \right).
\end{align}
By construction, the normalized coefficients satisfy
\begin{equation}
    \sum_{i=1}^{N_{\mathrm{trans}}}
    \left(
        \frac{\mathfrak u_i}{\mathfrak u}
    \right)^2
    =
    1 \;,
\end{equation}
and if we define $E$ as the energy of the pulse over the timestep $\Delta t$ one has
\begin{equation}
    E = \hbar\Omega_0 \sqrt{\sum_{l=1}^{N_\mathrm{trans}} \mathfrak{u}_l^2} = \hbar\Omega_0 \mathfrak{u} \;.
\end{equation}
Let $E_\text{max}$ be the maximum acceptable energy for the system, imposing $\Omega = E_\text{max}/\hbar$ as the maximum physical amplitude allowed by the experimental setup, one can define the corresponding dimensionless maximum amplitude as
\begin{equation}
    \widetilde{\Omega}=\frac{\Omega}{\Omega_0}.
\end{equation}
The original discretized time step produces the unitary
\begin{equation}
    U(\Delta t)
    =
    \exp\left[
        -i
        \Omega_0 \mathfrak u
        \left(
        \sum_{i=l}^{N_{\mathrm{trans}}}
        \frac{\mathfrak u_l}{\mathfrak u}
        \left(
            \cos (\theta_l) \, \sigma_{x,l}
            +
            \sin (\theta_l) \, \sigma_{y,l}
        \right)
        \right)
        \Delta t
    \right].
\end{equation}
The same unitary can be generated with a pulse whose total physical
amplitude is bounded by $\Omega$, provided that the duration is
rescaled to
\begin{equation}
    \Delta t'
    =
    \frac{\mathfrak u}{\widetilde{\Omega}}
    \Delta t
    =
    \frac{\Omega_0 \mathfrak u}{\Omega}
    \Delta t .
\end{equation}
Indeed,
\begin{align}
    U(\Delta t)
    &=
    \exp\left[
        -i
        \Omega
        \left(
        \sum_{l=1}^{N_{\mathrm{trans}}}
        \frac{\mathfrak u_l}{\mathfrak u}
        \left(
            \cos (\theta_l) \, \sigma_{x,l}
            +
            \sin (\theta_l) \, \sigma_{y,l}
        \right)
        \right)
        \Delta t'
    \right].
\end{align}

We can verify that the energy of the new rescaled pulse is therefore the maximal allowed energy $E_\text{max}$ leading to
\begin{equation}
    \hbar\Omega \sqrt{\sum_{l=1}^{N_{\mathrm{trans}}}
    \left(
        \frac{\mathfrak u_l}{\mathfrak u}
    \right)^2} = \hbar\Omega = E_\text{max} \;.
\end{equation}
This rescaling expresses the equivalence between using a larger control
amplitude for a shorter time and using the maximum allowed amplitude for a
longer time and ensures a fair comparison between algorithms, but is relevant only when the drift Hamiltonian can be neglected. We chose this framework for the simplicity it provides, but the optimal method we present in the following can be adapted to the case where a drift Hamiltonian cannot be neglected.

\subsection{A novel optimization method based on a shooting method}~\label{subsec:Magicarp_pres}
The Method for Adjoint and Gradient-based self-Iterative Construction And Refinement of Pulses (MAGICARP) is a method described in~\cite{Magicarp} and named as such to distinguish it from conventional gradient techniques. The core of this article lies in its numerical implementation for a system inspired by the physics of SMMs, especially for the triple and double decker systems detailed in~\cref{subsec:SMM}.

\subsubsection{Heuristic explanation of the Pontryagin Maximum Principle}
The MAGICARP consists of a finite dimensional parametrization of controls in feedback form which is derived from the Pontryagin Maximum Principle (PMP), a mathematical tool used to reframe an optimisation problem on trajectories (living in an infinite dimensional space) into a pseudo-Hamiltonian problem where only a finite amount of coefficients, encoding the trajectories, are to be optimized. This change of point of view is analogous to using the principle of least action in classical physics. The optimization of the control fields can be recast as the following problem, with $F$ the function defining the dynamics of $U$:
\begin{align*}
    \forall t,\, \hbar \dot{U}(t) &= F(U(t), u(t), t),\\
    U(0) &= U_0,\\
    U(t_f) &= U_f,\\
    \forall t,\, u(t) &\in \mathbb{R}^{N_{\text{ctrls}}} \;,
\end{align*}
with $u(t)$ a vector of all the $u_k(t)$ for $k\leq N_{\text{ctrls}}$, $U(t)$ the gate obtained at time $t$ via the optimisation protocol, $U_0$ that is usually the identity and $U_f$ the target gate. 
To select a control among all those steering the system between the desired states, we seek to minimize a cost function $\mathcal{C}$ including a terminal cost $G$ and a running cost $F_0$ reflecting what a desirable control is:
\begin{equation}
\mathcal{C} = G(U(t_f), t_f) + \int_{0}^{t_f} F_0(U(t), u(t), t) \, dt.
\end{equation}
To achieve the transfer as quickly as possible, $G$ could either be the duration $t_f$ or, if we drop the hard constraint $U(t_f) = U_f$, the infidelity between $U(t_f)$ and $U_f$. For instance, in $F_0$, we could capture that the pulse energy has to be minimized by choosing $F_0(u(t))=\hbar \Omega_0 \sqrt{\sum_k {u_k^2(t)}}$.
If the pulse amplitude exceeds $\Omega$, the maximum allowed pulse amplitude, the controls can always be rescaled and applied over a longer duration, as in GRAPE, since amplitude and time are equivalent through the relation discussed in~\cref{sec:GRD} and detailed in~\cref{sec:GRDvsGRAPE}.

The PMP~\cite{pontryagin1962optimal} is an important tool of optimal control theory. To strengthen the physical intuition behind it, we provide a heuristic derivation inspired by~\cite{Sugny_PMP} of the corresponding necessary conditions of optimality by connecting it to the Euler-Lagrange equations from the calculus of variations.
Starting from static initial and final target quantum gates (that is, static boundary conditions), it converts the problem of optimizing a cost function into the problem of finding the extremum of an action.

\begin{definition}
    Let $U(t) \in \mathbb{R}^{n \times n}$ be the quantum gate we construct at time $t$, $\Lambda(t) \in \mathbb{R}^{n \times n}$ an adjoint state and $u(t) \subset \mathbb{R}^{N_{\rm ctrls}}$ the vector of controls $u_k(t)$ in \cref{eq:Hkuk}. Let $G$ be a terminal cost function, for instance the infidelity, $F_0$ a running cost and $F$ a vector function describing the system dynamics. An optimal control is defined as control that is an extremum of the action $S$ given by:
    \begin{align*}
        S &= \int_0^{t_f} dt \mathcal{L} + G(U(t_f), t_f),\\
        \mathcal{L} &=F_0(U(t), u(t), t) + \Lambda(t) \left( \hbar \dot{U}(t) - F(U(t), u(t), t) \right)\\
        \hbar \Lambda &= \frac{\partial \mathcal{L}}{\partial \dot{U}} \;.
    \end{align*}
\end{definition}
The function under the integral is the Lagrangian $\mathcal{L}$ of the optimal control problem. 
The first variation $\delta S$ of the action in response to a perturbation $\delta u$ of the control is formally given in terms of the state variation $\delta U$ and adjoint $\Lambda$ by
\begin{align*}
\delta S &= \int_{0}^{t_f} dt\left[ \frac{\partial F_0}{\partial U} \delta U + \frac{\partial F_0}{\partial u} \delta u + \delta \Lambda (\hbar \dot{U} - F) + \Lambda \left( \hbar \delta \dot{U} - \frac{\partial F}{\partial U} \delta U - \frac{\partial F}{\partial u} \delta u \right) \right] + \frac{\partial G}{\partial U(t_f)} \delta U(t_f),\\
 &= \int_{0}^{t_f} dt \Bigg( \left[ \frac{\partial F_0}{\partial U} - \hbar\dot{\Lambda} - \Lambda \frac{\partial F}{\partial U} \right] \delta U 
+ \left[\hbar \dot{U} - F\right] \delta \Lambda 
+ \left[ \frac{\partial F_0}{\partial u} - \Lambda \frac{\partial F}{\partial u} \right] \delta u \Bigg)\\
\quad & + \left( \frac{\partial G}{\partial U(t_f)} + \hbar \Lambda(t_f) \right) \delta U(t_f) + \hbar \Lambda(0)\delta U(0),
\end{align*}
by integration by parts of the term $\Lambda \delta \dot{U}$. Moreover, $\delta U(0)=\delta U(t_f)=0$ since the propagator starts from the identity and is required to reach the target gate. Since $\delta S = 0$, the extrema of $S$ fulfill the following conditions.
\begin{proposition}
If a control $u$ associated to the state $U$ is optimal, then there exists an adjoint state $\Lambda$ such that the following conditions hold:\\
        \begin{align}
            \dot{\Lambda} &= \frac{\partial F_0}{\partial U} - \Lambda \frac{\partial F}{\partial U}, \label{eq:1} \\
            \hbar \dot{U} &= F, \label{eq:2} \\
            \frac{\partial F_0}{\partial u} &= \Lambda \frac{\partial F}{\partial u}, \label{eq:3}\\
            U(0) &= U_0, \quad U(t_f) = U_f. \notag
        \end{align}
\end{proposition}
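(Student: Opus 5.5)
The plan is to read the proposition as the stationarity conditions $\delta S=0$ of the action $S$ in the Definition above, using the first-variation formula already obtained after integrating $\Lambda\,\delta\dot U$ by parts. The three variations $\delta U$, $\delta\Lambda$, $\delta u$ are treated as independent; the boundary conditions are then imposed separately.

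\textbf{Step 1: the boundary terms.} Since $U(0)=U_0$ and $U(t_f)=U_f$ are fixed, every admissible variation satisfies $\delta U(0)=\delta U(t_f)=0$. Both boundary contributions in $\delta S$ therefore vanish, whatever the values of $\Lambda(0)$ and $\Lambda(t_f)$. As a consequence, no transversality condition on $\Lambda(t_f)$ appears; one would arise only if the hard constraint $U(t_f)=U_f$ were relaxed into the terminal cost $G$. The same step shows that the endpoint conditions $U(0)=U_0$, $U(t_f)=U_f$ listed in the proposition are simply the constraints defining the problem.

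\textbf{Step 2: the integral terms.} The remaining part of $\delta S$ is an integral, over $[0,t_f]$, of three brackets paired respectively with $\delta U$, $\delta\Lambda$ and $\delta u$. I would choose $\delta\Lambda$ freely, with $\delta U$ and $\delta u$ set to zero, and apply the fundamental lemma of the calculus of variations. This forces $\hbar\dot U - F = 0$, which is \cref{eq:2}. Next I vary $\delta u$ freely. Here $\delta U$ is no longer independent of $\delta u$, since it is determined through the linearized dynamics. However, in the Lagrangian formalism with the multiplier $\Lambda$, the variations are treated as independent, and this is legitimate once the adjoint equation is imposed. Imposing it gives \cref{eq:3}, $\partial F_0/\partial u = \Lambda\,\partial F/\partial u$.

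\textbf{Step 3: the adjoint equation and existence of $\Lambda$.} The $\delta U$ bracket yields \cref{eq:1}. Here I would not simply claim that $\delta U$ is arbitrary. Instead, I \emph{define} $\Lambda$ as the solution of the linear ODE \cref{eq:1}. This is a linear matrix ODE with continuous coefficients along the optimal trajectory, so its solution exists on $[0,t_f]$, which supplies the existence claim for $\Lambda$. With this choice the $\delta U$ term vanishes identically. For an optimal $u$, the remaining first variation, $\int(\partial F_0/\partial u-\Lambda\,\partial F/\partial u)\,\delta u\,dt$, must then vanish for all admissible $\delta u$, which gives \cref{eq:3}.

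\textbf{Main obstacle.} The delicate point is the endpoint constraint $U(t_f)=U_f$. With this constraint, not every $\delta u$ is admissible, because $\delta u$ must keep $\delta U(t_f)=0$. A rigorous argument would let $\Lambda(t_f)$ act as a Lagrange multiplier for the terminal constraint: the admissible $\delta u$ form a subspace, and one chooses $\Lambda(t_f)$ so that the resulting linear functional annihilates the complement of that subspace. This requires controllability, i.e.\ surjectivity of the endpoint map's differential (a normal, non-abnormal extremal). Since the text explicitly calls the derivation heuristic, I would state this normality assumption, note that the real-matrix notation requires conventions for $\partial/\partial U$ with matrix arguments (trace pairing), and refer to \cite{pontryagin1962optimal} for the general case including abnormal multipliers.
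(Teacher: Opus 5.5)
Your proposal is correct and follows the paper's own heuristic route. Like the paper, you compute $\delta S$, integrate $\Lambda\,\delta\dot U$ by parts, discard the boundary terms because $\delta U(0)=\delta U(t_f)=0$, and read off \cref{eq:1,eq:2,eq:3} from the three brackets. The one difference is an improvement: the paper simply treats $\delta U$, $\delta\Lambda$, $\delta u$ as independent, whereas you correctly observe that $\delta U(t_f)=0$ restricts the admissible $\delta u$, so $\Lambda(t_f)$ must be chosen as a Lagrange multiplier and a normality assumption is genuinely needed, consistent with Remark~\ref{remark:abnormal}.

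One precision on your ``Main obstacle'' paragraph: the hypothesis you need is surjectivity of the differential of the end-point map at the optimal control itself (a non-singular control). That is what makes every element of its kernel an actual admissible variation, and it is not implied by controllability of the system in the Lie-algebra-rank sense.
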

From a physical point of view, \cref{eq:2} describes an Euler-Lagrange equation for the state of the system $U(t)$, \cref{eq:1} describes the dynamics of the adjoint state $\Lambda(t)$ and \cref{eq:3} imposes conditions on the control $u(t)$.
We can translate this formulation into a pseudo-Hamiltonian problem using the Pontryagin pseudo-Hamiltonian defined by:
\begin{align*}
    H_P&=\hbar \Lambda \cdot \dot{U} - \mathcal{L} = \Lambda \cdot F - F_0 \;,
\end{align*}
where $\cdot$ denotes the real Hilbert–Schmidt inner product of matrices, defined as $A \cdot B = \operatorname{Re} \operatorname{Tr}(B^\dagger A)$ (the real part is necessary for the pseudo-Hamiltonian obtained from the PMP.). Given that the system starts from the identity and follows a physical trajectory where \cref{eq:2} is satisfied, a solution to the optimization problem must satisfy the following necessary conditions
\begin{align}\label{eq:PMP_Hamiltonian}
    H_P&=\Lambda \cdot F - F_0,\\
    \hbar \dot{\Lambda} &= -\frac{\partial H_P}{\partial U}, \label{eq:lambdadot}\\
    \hbar \dot{U} &= \frac{\partial H_P}{\partial \Lambda}, \label{eq:UdotisF}\\
    \vec{0}&=\frac{\partial H_P}{\partial u},\label{eq:Denisuk}\\
    U(0) &= I, \quad U(t_f) = U_f. \label{eq:U0}
\end{align}

\subsubsection{Derivation of the parameterisation (MAGICARP)}
In the following we
work on the fixed computational time interval $s\in[0, 1]$. For a quantum evolution,~\cref{eq:2} corresponds to the Schrödinger
equation $\hbar\dot U=-iH(s)U$ where $U(0)$ in~\cref{eq:U0} is the identity and the system's Hamiltonian is
\begin{equation}\label{eq:Hkuk_time_s}
    H(s) = \hbar \Omega_0 \sum_{k=1}^{N_{\text{ctrls}}} u_k(s)H_k \;.
\end{equation}

Choosing the energy of the pulses, defined in \ref{sec:GRDvsGRAPE}, as the running cost
\begin{equation}\label{eq:running_cost_amplitude}
    F_0(u(s))= \hbar \Omega_0 \sqrt{\sum_k u_k^2(s)} \;,
\end{equation}
in the pseudo-Hamiltonian~\cref{eq:PMP_Hamiltonian}, Janković et al. derived in~\cite{Magicarp} the
following result using the PMP.
\begin{proposition}\label{prop:shape_controls}
    If the pulses $u_k(s)$ in~\cref{eq:Hkuk_time_s} minimize the running cost~\cref{eq:running_cost_amplitude}, then their shape is given by
    \begin{equation}
        \frac{u_k(s)}{c(s)} = \operatorname{Re} \operatorname{Tr} \left( U(s) M U^{\dagger}(s) H_k \right),
        \label{eq:optimal_control}
    \end{equation}
    where $U(s)$ is the propagator of the controlled system at time $s$, $M$ is a constant traceless Hermitian matrix and
    \begin{equation}\label{eq:enveloppe}
        c(s) = \sqrt{\sum_k u_k^2(s)}.
    \end{equation}
\end{proposition}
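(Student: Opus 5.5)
We apply the necessary conditions \cref{eq:PMP_Hamiltonian}--\cref{eq:U0} to the Schrödinger dynamics $\hbar\dot U = F(U,u) = -iH(s)U$, with $H(s)$ as in \cref{eq:Hkuk_time_s} and running cost \cref{eq:running_cost_amplitude}. We work in the real Hilbert--Schmidt geometry $A\cdot B=\RE\tr(B^\dagger A)$. The steps are: write the pseudo-Hamiltonian explicitly, solve the adjoint equation, impose stationarity in $u$, and read off the shape of the controls.

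First we write
\[
H_P=\RE\tr\bigl((-iH(s)U)^\dagger\Lambda\bigr)-F_0
=\hbar\Omega_0\sum_k u_k\,\RE\tr\bigl(iU^\dagger H_k\Lambda\bigr)-\hbar\Omega_0\sqrt{\textstyle\sum_k u_k^2},
\]
using $H_k^\dagger=H_k$. We then introduce
\[
P(s)=i\Lambda(s)U^\dagger(s)
\]
(or its Hermitian part). By cyclicity, $\RE\tr(iU^\dagger H_k\Lambda)=\RE\tr(H_k\,i\Lambda U^\dagger)=\RE\tr(P H_k)$.

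Next we use \cref{eq:lambdadot}. Since $H_P$ depends on $U$ linearly through $\RE\tr(\Lambda^\dagger(-iHU))$, its gradient with respect to $U$ is $(-iH)^\dagger\Lambda=iH\Lambda$. Hence
\[
\hbar\dot\Lambda=-iH\Lambda,
\]
so $\Lambda$ obeys the same Schrödinger equation as $U$. It follows that $\Lambda(s)=U(s)\Lambda(0)$. Then
\[
P(s)=iU(s)\Lambda(0)U^\dagger(s)=U(s)\,M_0\,U^\dagger(s),
\qquad M_0=i\Lambda(0).
\]
Equivalently, one can check directly that
\[
\hbar\tfrac{d}{ds}(\Lambda U^\dagger)=-iH\Lambda U^\dagger+\Lambda U^\dagger iH=-i[H,\Lambda U^\dagger],
\]
which is a Heisenberg equation. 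We then reduce $M_0$ to a constant traceless Hermitian $M$. Only the Hermitian part $(M_0+M_0^\dagger)/2$ contributes to $\RE\tr(\,\cdot\,H_k)$, because $\RE\tr(AH_k)=0$ for anti-Hermitian $A$ and Hermitian $H_k$. Conjugation by $U$ preserves Hermiticity. The trace part of $M$ contributes $\tr(H_k)/d$, which vanishes because the control Hamiltonians $\sigma_{x,l},\sigma_{y,l}$ are traceless. So we may take $M$ Hermitian and traceless.

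Finally we impose \cref{eq:Denisuk}. Write $a_k(s)=\RE\tr(U M U^\dagger H_k)$. Where $c(s)\neq0$, stationarity gives
\[
\hbar\Omega_0\bigl(a_k-u_k/c\bigr)=0,
\]
which is \cref{eq:optimal_control}. Where $c(s)=0$, both sides of the stated formula are degenerate. We treat these points separately: either by a subgradient condition $\|a\|\le1$ (the abnormal or singular case), or by assuming the normal case with $c>0$.

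The main obstacle is rigor rather than computation. The running cost is not differentiable at $u=0$, so \cref{eq:Denisuk} holds only as a maximization condition. In addition, the PMP on the unitary group must be justified, either by identifying the adjoint with a Lie-algebra element or by working in the ambient real matrix space. We first try to handle the latter by working in the ambient space of complex matrices viewed as a real vector space, where the conditions above were stated. We then note that the gradient formula for $\RE\tr$ is unambiguous there. For the former, we replace stationarity by maximization of $H_P$ over $u$: $\sum_k u_k a_k-\|u\|$ is maximized along the direction of $a$. This yields $u/\|u\|=a$ and $\|a\|=1$ whenever $u\neq0$, recovering the proposition.
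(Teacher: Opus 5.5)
Your proposal is correct and follows the paper's proof essentially step for step: you write $H_P$ for $F=-iHU$, obtain $\hbar\dot\Lambda=-iH\Lambda$ so that $\Lambda(s)=U(s)\Lambda(0)$, set $M=i\Lambda(0)$, and read off $u_k/c$ from stationarity in $u$. Two of your steps are slightly more careful than the paper's: discarding the anti-Hermitian and trace parts of $i\Lambda(0)$ replaces its bare assertion that $\Lambda(0)\in\mathfrak{su}(d)$, and your maximization argument handles the non-differentiability at $u=0$, which the paper ignores. One terminological slip: the case $u=0$, $|a|\le 1$ is still a normal extremal, since abnormality means a vanishing multiplier on $F_0$, and your $H_P$, like the paper's, already excludes that.
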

\begin{proof}
    In the pseudo-Hamiltonian $H_P$ of~\cref{eq:PMP_Hamiltonian}, by choosing $F = -\iu H U$ where $H$ the Hamiltonian of the controlled system in~\cref{eq:Hkuk_time_s}, we recover Schrödinger's equation on the propagator from~\cref{eq:UdotisF}, that is $\hbar \dot{U} = -i H U$. Given the running cost~\cref{eq:running_cost_amplitude}, the pseudo-Hamiltonian rewrites
    \begin{align}\label{eq:HP_with_F0}
        H_P&= \operatorname{Re} \operatorname{Tr} \left( -i \Lambda^\dagger H U \right) - \hbar \Omega_0\sqrt{\sum_k {u_k^2}}.
    \end{align}
    Substituting this expression for $H_P$ in~\cref{eq:lambdadot} on $\Lambda$ and in the maximization condition~\cref{eq:Denisuk}, we obtain the following conditions:
    \begin{align}\label{eq:PMP_with_F0}
        \hbar\dot{\Lambda} &= -i H \Lambda,\\
        \hbar\dot{U} &= -iHU,\label{eq:lambdadot2}\\
        \vec{0} &= \operatorname{Re} \operatorname{Tr}  \left(-i \Lambda^\dagger\frac{\partial H}{\partial\vec{u}}U\right) - \hbar \Omega_0\frac{\vec{u}}{\sqrt{\sum_k u_k^2}},\label{eq:magi_controls}\\
        U(0) &= I, \quad U(1) = U_f \notag \;.
    \end{align}
    A useful observation is that equation~\cref{eq:lambdadot2} is can be solved for $\Lambda$ in terms of the propagator $U$. Indeed, differentiating the function $\Lambda(s)^\dagger U(s)$, we find that its derivative is zero everywhere. Hence $\Lambda(s) = U(s)\Lambda(0)$. Since $\Lambda(0)$ is in the Lie algebra of the special unitary group, $\Lambda(0) = - i M$ for some traceless Hermitian matrix $M$. We then rewrite $-i \Lambda^\dagger$ in~\cref{eq:magi_controls} as $M U^\dagger$. Using the cyclic invariance of the trace and $\partial H/\partial u_k=\hbar \Omega_0 H_k$ we obtain the desired expression for the control pulses given in~\cref{eq:optimal_control}.
\end{proof}

MAGICARP is a parametrization of controls by means of the matrix $M$ representing the initial adjoint state. By means of a suitable choice of optimization routine, starting from an initial guess, this parameter is iteratively updated until the target gate is attained up to some user-defined tolerance.
As such, it belongs to the class of shooting methods in which, roughly, one aims at finding the appropriate initial covector in such a way that the set of necessary conditions derived from the PMP are satisfied and the desired target state is attained with sufficient accuracy. In particular MAGICARP suffers from the necessity of finding relevant initial guesses for the parameter $M$.
However what is worth noting here is that, by working with the propagator of the equation of motion and exploiting the Lie group structure, the maximization condition of the PMP leads to a control in closed-loop form. In other words, the shape of the controls are determined only by the instantaneous propagator $U(t)$.
One notable advantage is that the parameter $M$ is not tied to any choice of discretization scheme and fully determines the associated control given the control Hamiltonians.

\begin{remark}[Time optimality]
    Consider the energy running cost $F_0$ in~\cref{eq:running_cost_amplitude}. Minimising $\int_0^1 F_0(u(s))\,\mathrm{d}s$ is simply the time-minimal problem reformulated in a normalized computational variable $s$. Indeed for driftless dynamics, minimizing the physical duration $t_f$ under the amplitude constraint
    $\sqrt{\sum_{k}u_k^2(t)} \leq \Omega/\Omega_0$ is equivalent to minimizing the pulse area
    $\hbar\Omega_0 \int_0^1\sqrt{\sum_{k}u_k^2(s)}\,\mathrm{d} s$ on the fixed computational time interval $0 \leq s \leq 1$. This is similar in spirit to the rescaling of the time duration of pulses in \ref{sec:GRDvsGRAPE} to match (and thus maximize) the experimentally allowed energy.
\end{remark}

\begin{remark}[Normal extremals]\label{remark:abnormal}
    Proposition~\ref{prop:shape_controls} applies only to the so-called \emph{normal} extremals of the Pontryagin maximum principle. In other words, for certain choices of control Hamiltonians and target gates, the optimal controls may not be representable by~\eqref{eq:optimal_control} for any traceless Hermitian matrix M. Nevertheless, this simple parametrization remains a natural candidate for numerical exploration in the search for more time-efficient control pulses. See remark~\ref{rem:abnormal_extremals} in appendix~\ref{app:Numerical_framework} for further discussion.
\end{remark}

\begin{remark}[Necessity]
    We emphasize that the shape of the controls in Proposition~\ref{prop:shape_controls} is only necessary for optimality. Put differently and neglecting the subtleties of Remark~\ref{remark:abnormal}, if a pulse is optimal then it has this shape but a pulse with this shape need not be optimal.
\end{remark}
\section{Numerical proof-of-concept study on a realistic physical system}~\label{sec:Numerical_implementation_on_SMM}
\subsection{Single-Molecule Magnets}~\label{subsec:SMM}
The interest in using single-molecule magnets (SMMs) as quantum devices was initiated in the 90's, with the first observation of quantum tunneling of the magnetisation (QTM) transitions in Mn$_{12}$~\cite{SMM_concept_discovery, SMMs_in_spintronics_plus_history}. Since then, molecules have been engineered that amplify these effects and exhibit long spin and coherence lifetimes, most notably the terbium double (and triple) deckers. We will compare the numerical results obtained using GRAPE and MAGICARP on these two molecules.
\subsubsection{Physical and experimental constraints}
The first molecule of interest is the bisphthalocyaninato terbium(III) ($\text{Tb}\text{Pc}_2$) molecule, also known as the double decker, a SMM synthesized in 2003~\cite{TbPc2_discovery_as_SMM} and shown in \cref{fig:double_decker}. The $\text{Tb}^{3+}$ ion has a nuclear spin of $I=3/2$~\cite{godfr2017,janko2024}, which gives rise to $2I + 1 =4$ energy levels. This makes the $\text{Tb}\text{Pc}_2$ an effective qudit with $d = 4$ levels~\cite{qudit_concept}. It is also possible to combine two double deckers into a triple decker $\text{Tb}_2\text{Pc}_3$ molecule which then, thanks to a coupling between the two $\text{Tb}^{3+}$ ions~\cite{triple_deckers}, can act as an effective qudit with $d = 16$ levels~\cite{Tb2Pc3_expe}.
\begin{figure}[hbt]
    \centering
    \subfloat[\label{fig:double_decker}]{%
        \includegraphics[width=0.48\textwidth]{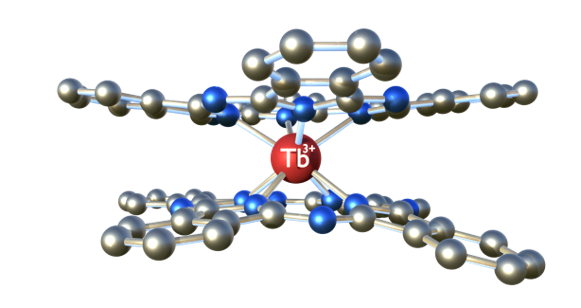}%
    }
    \hfill
    \subfloat[\label{fig:triple_decker}]{%
        \includegraphics[width=0.48\textwidth]{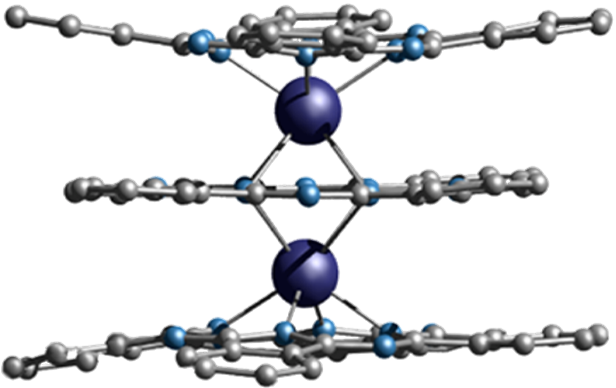}%
    }
    \caption{(a) A $\text{Tb}\text{Pc}_2$ double decker, composed of the central $\text{Tb}^{3+}$ ion (red) surrounded by nitrogen (blue) and carbon (grey) atoms of the phthalocynanine ligands. Image from~\cite{hartm2024}.\ (b) A triple decker, $\text{Tb}_2\text{Pc}_3$. The two dark blue atoms are also $\text{Tb}^{3+}$ ions. Image from~\cite{hartm2024}.}
\label{fig:chemical_structure_TbPc}
\end{figure}
To construct quantum gates we need to trigger transitions between the system's energy levels by applying an electromagnetic pulse with an appropriate frequency. Not all transitions are allowed. In the example of the double decker, selection rules only allow for addressing transitions between levels separated by a $\Delta I = \pm 1$~\cite{janko2024}. 
\begin{figure}[hbt] 
    \centering
    \subfloat[\label{fig:double_levels}]{%
        \includegraphics[width=0.48\textwidth]{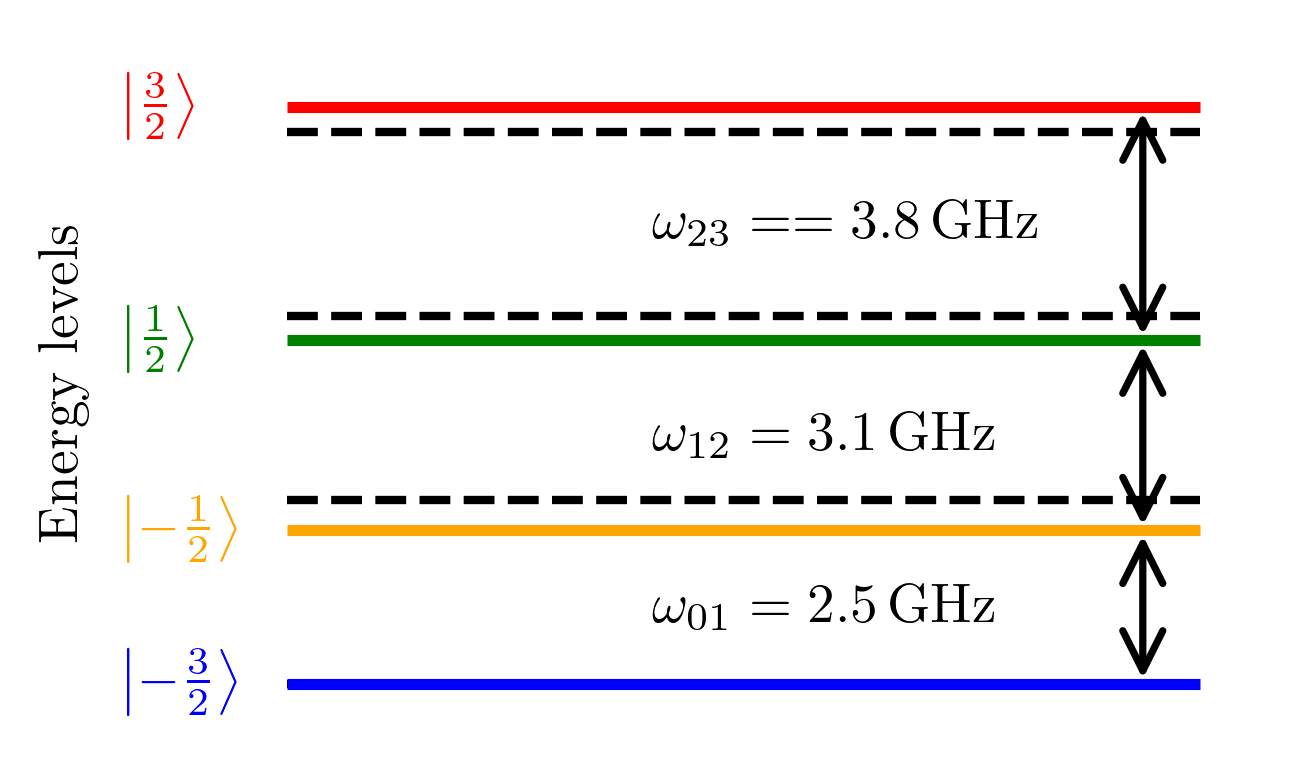}%
    }
    \hfill
    \subfloat[\label{fig:graph}]{%
        \includegraphics[width=0.48\textwidth]{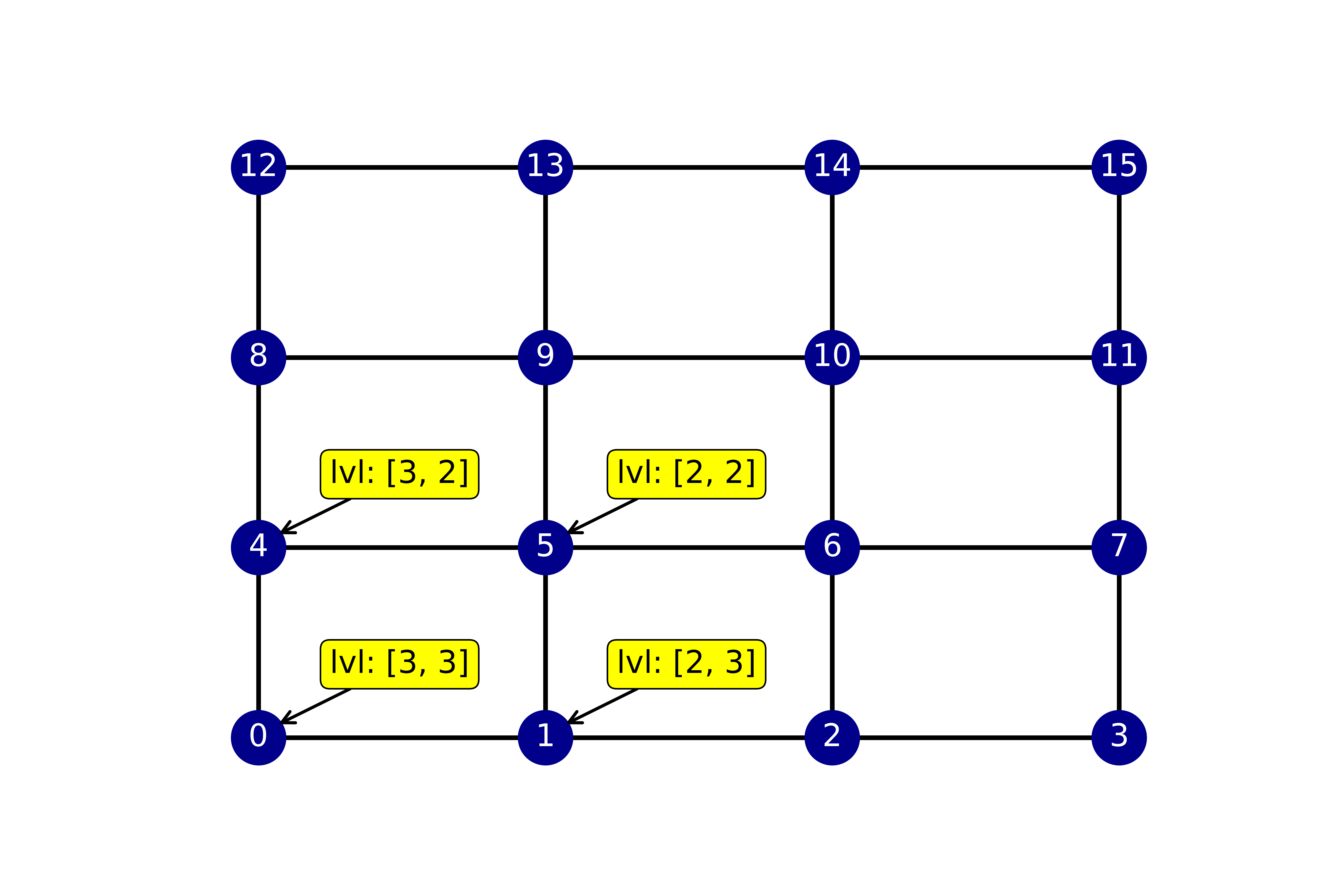}%
    }
    \caption{(a) The linear graph connecting the four energy levels of a double decker $\text{Tb}\text{Pc}_2$. The spacing is unequal thanks to a non linear term in the Hamiltonian (electric quadrupole hyperfine interaction). Otherwise energy levels would be positioned on the horizontal dashed lines. Values are taken  from~\cite{godfr2017}.\ (b) The graph connecting the sixteen energy levels of a triple decker $\text{Tb}_2\text{Pc}_3$. Each node represent an energy level and they are ranked by their energy. The edges represent addressable Rabi transitions between nodes. The energy levels are labeled as ``lvl'', where the first digit denotes the hyperfine level of the first ion Tb$^{3+}$ in Tb$_{2}$Pc$_{3}$, and the second digit denotes the hyperfine level corresponding to the other Tb$^{3+}$ ion (see fig. \ref{fig:chemical_structure_TbPc}). This figure is done using parameters that differ slightly from~\cite{Tb2Pc3_expe}, and at a large magnetic field in order to suppress level mixing.}\label{fig:deckers_graph}
\end{figure}
This means that only transitions shown in~\cref{fig:double_levels} have an effect on the system. For instance, an energy level corresponding to $I=-1/2$ can communicate when excited at the right frequency with the energy levels corresponding to $I=-3/2$ or $I=1/2$ but not with the level corresponding to $I=3/2$. Thus only neighbouring interactions are allowed, resulting in what we refer to as a linear graph. The resonances can be individually adressed thanks to a non-linear term that results from an electric quadrupole hyperfine interaction. For the case of the triple decker, only coupling between adjacent levels of nuclear spin $I_1$ and $I_2$ ($\Delta I = 1$) are allowed, the 1 or 2 indices denoting the hyperfine energy levels for each of the $\text{Tb}^{3+}$ ions. The non trivial resulting graph is shown in~\cref{fig:graph}. 
Usually the allowed transitions are determined experimentally and as long as all energy levels are connected, it is possible to create any unitary gate on the Hilbert space associated to these SMM, for instance by using the GRD method.
\subsubsection{Illustrations for a Quantum Fourier Transform}~\label{subsubsec:QFT_on_SMM}

The Quantum Fourier Transform (QFT) on a Hilbert space of dimension $d$ is a linear transformation that maps each computational basis state $\ket{x}$ to an equal-magnitude superposition of all basis states, weighted by complex phases:
\begin{equation*}
    \ket{x} \;\longrightarrow\; \frac{1}{\sqrt{d}} \sum_{k=0}^{d-1} e^{2\pi i x k / d} \ket{k}, \quad x = 0, 1, \dots, d-1.
\end{equation*}
\begin{figure}[tbp]
    \centering
    \subfloat[]{%
        \includegraphics[width=0.45\textwidth]{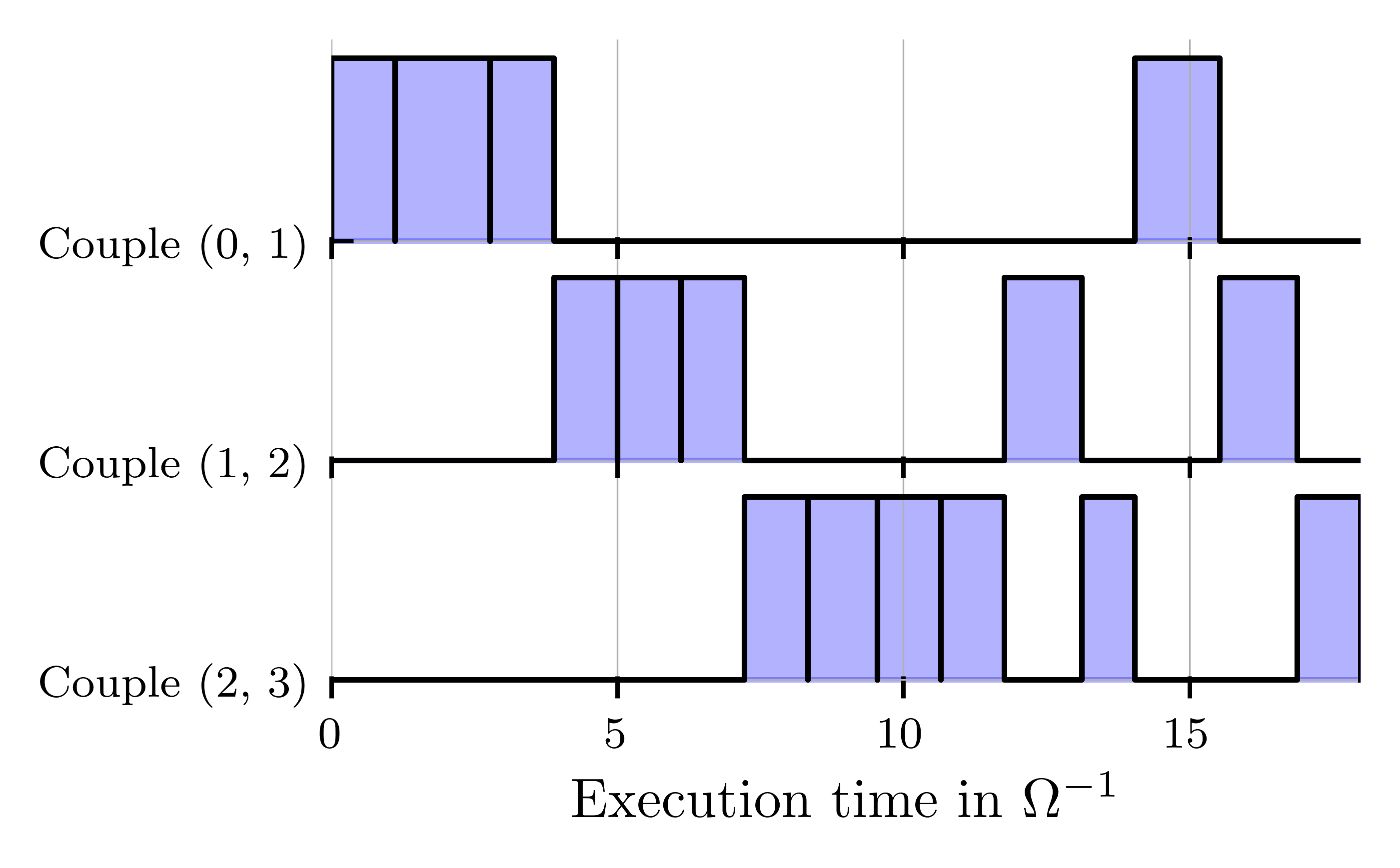}%
        \label{fig:controls_4}%
    }
    \hfill
    \subfloat[]{%
        \includegraphics[width=0.45\textwidth]{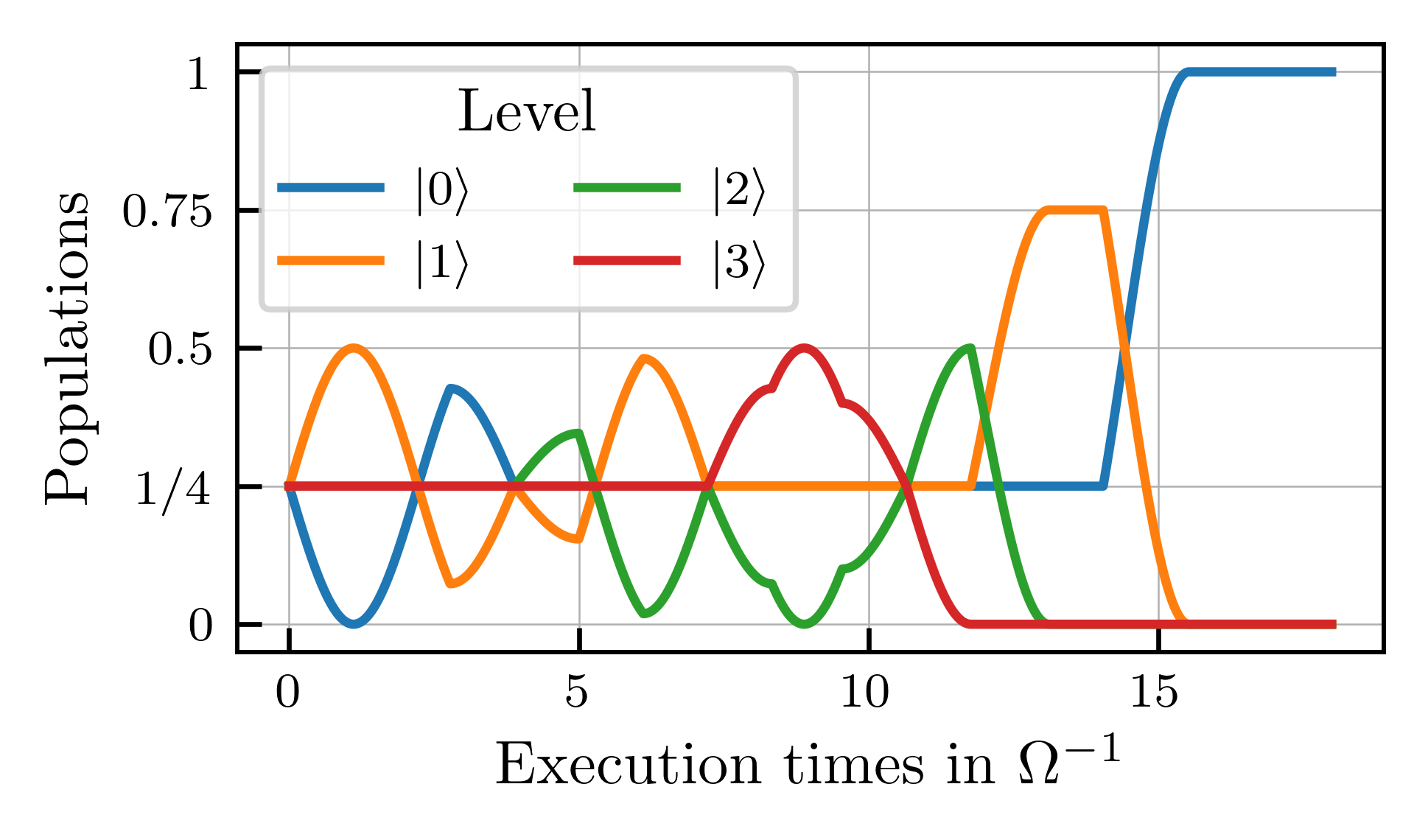}%
        \label{fig:pop_dynamics_4}%
    }

    \vspace{0.1cm}

    \subfloat[]{%
        \includegraphics[width=0.45\textwidth]{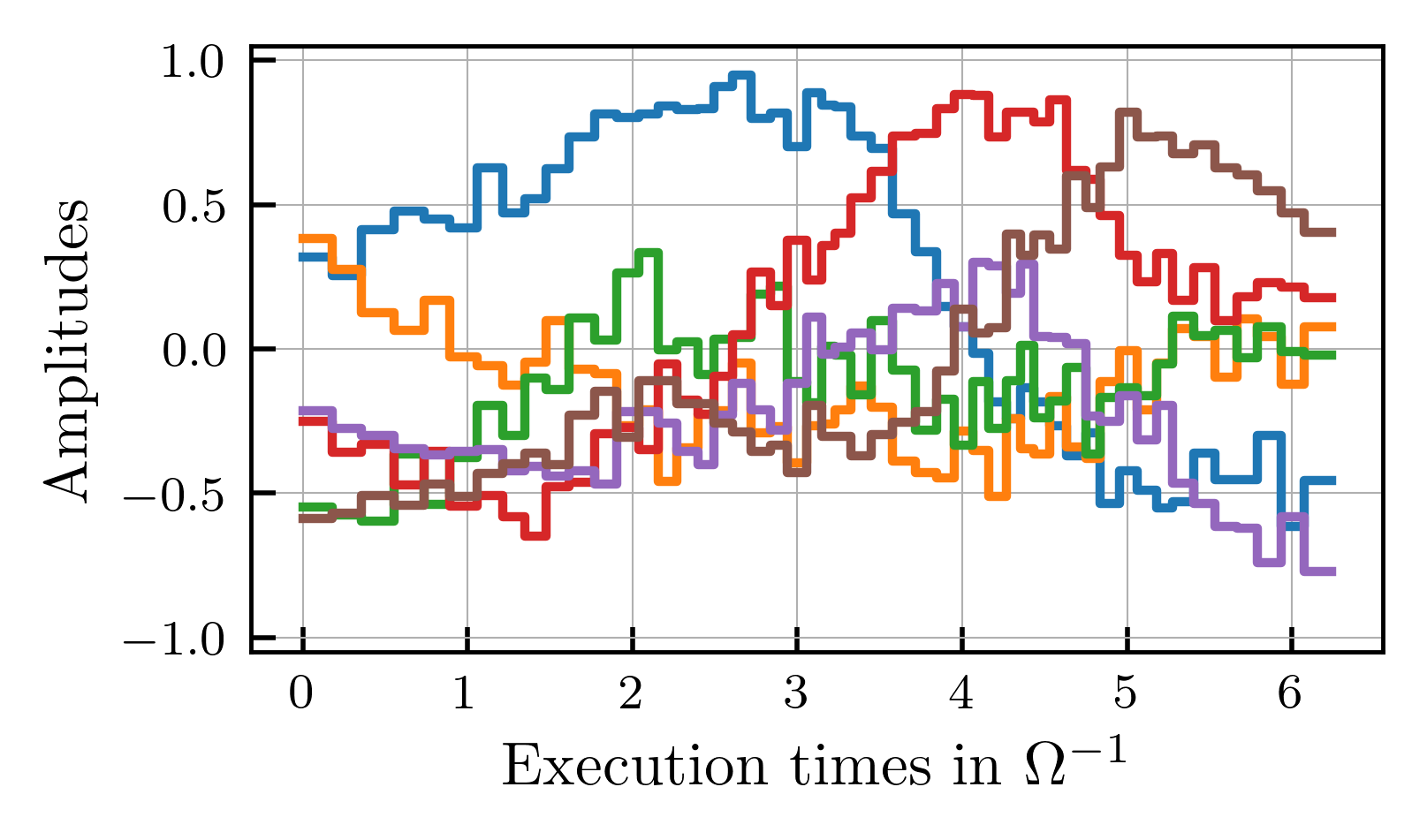}%
        \label{subfig:GRAPE_amps_H_4}%
    }
    \hfill
    \subfloat[]{%
        \includegraphics[width=0.45\textwidth]{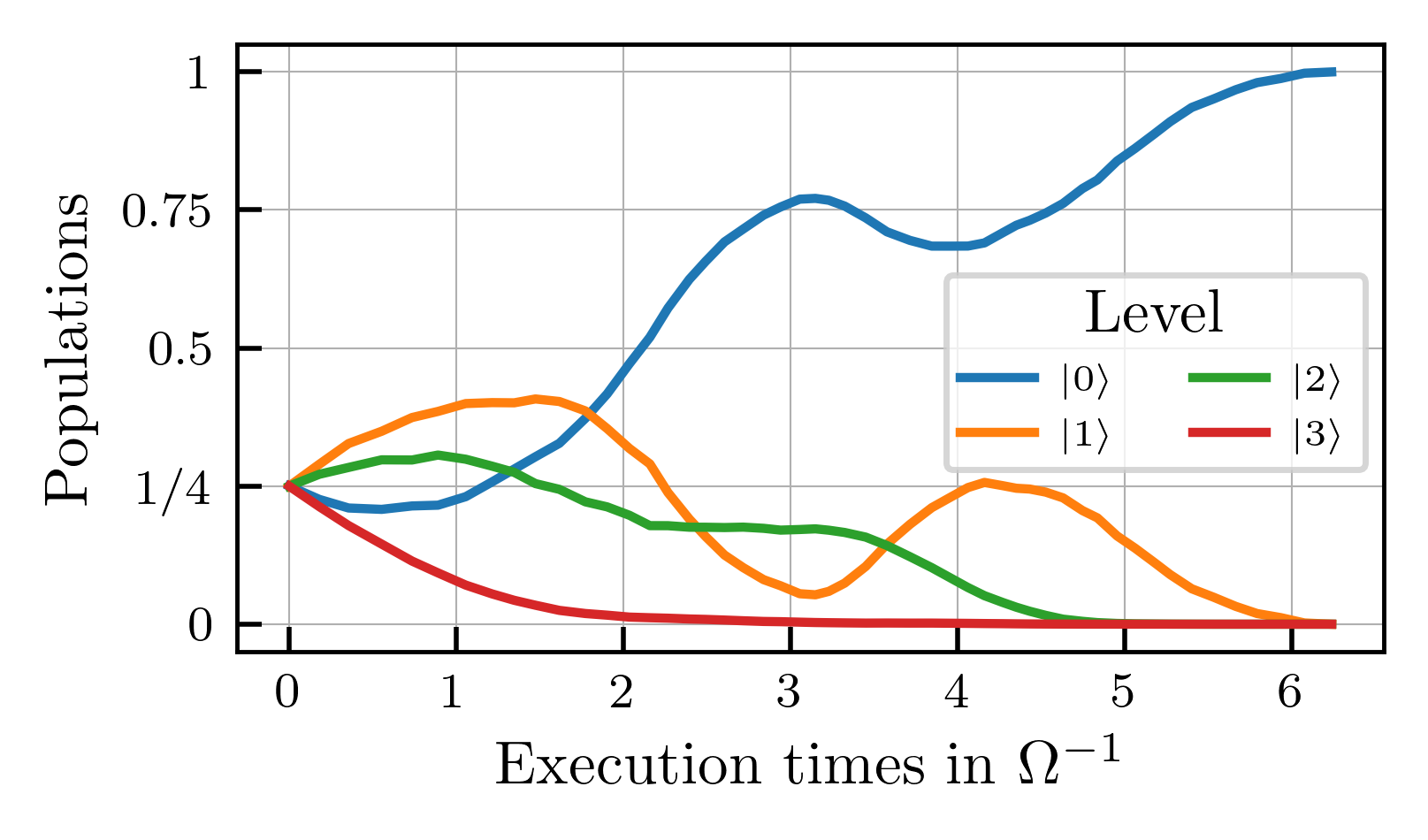}%
        \label{subfig:GRAPE_pop_H_4}%
    }

    \vspace{0.1cm}

    \subfloat[]{%
        \includegraphics[width=0.45\textwidth]{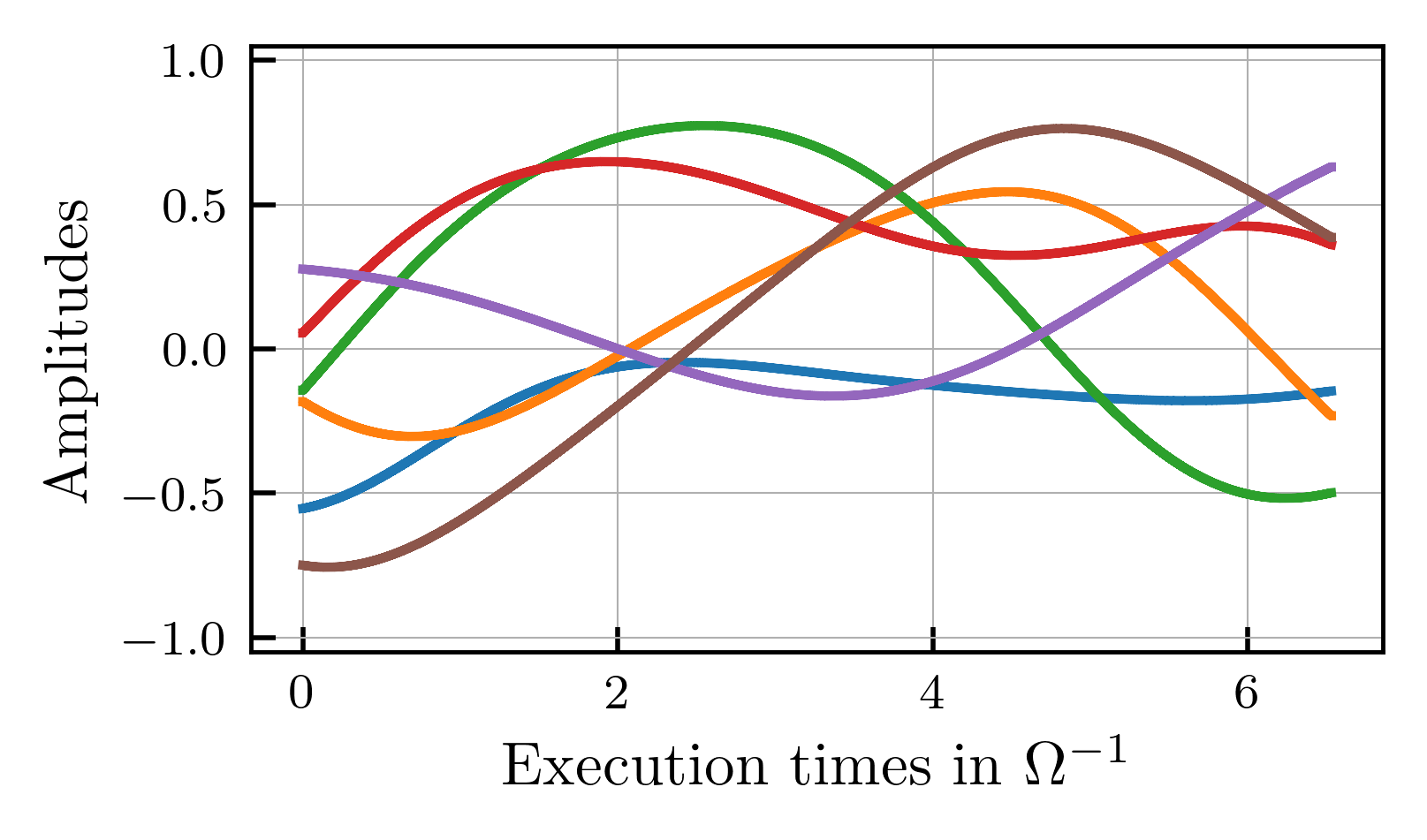}%
        \label{subfig:MAGICARP_amps_H_4}%
    }
    \hfill
    \subfloat[]{%
        \includegraphics[width=0.45\textwidth]{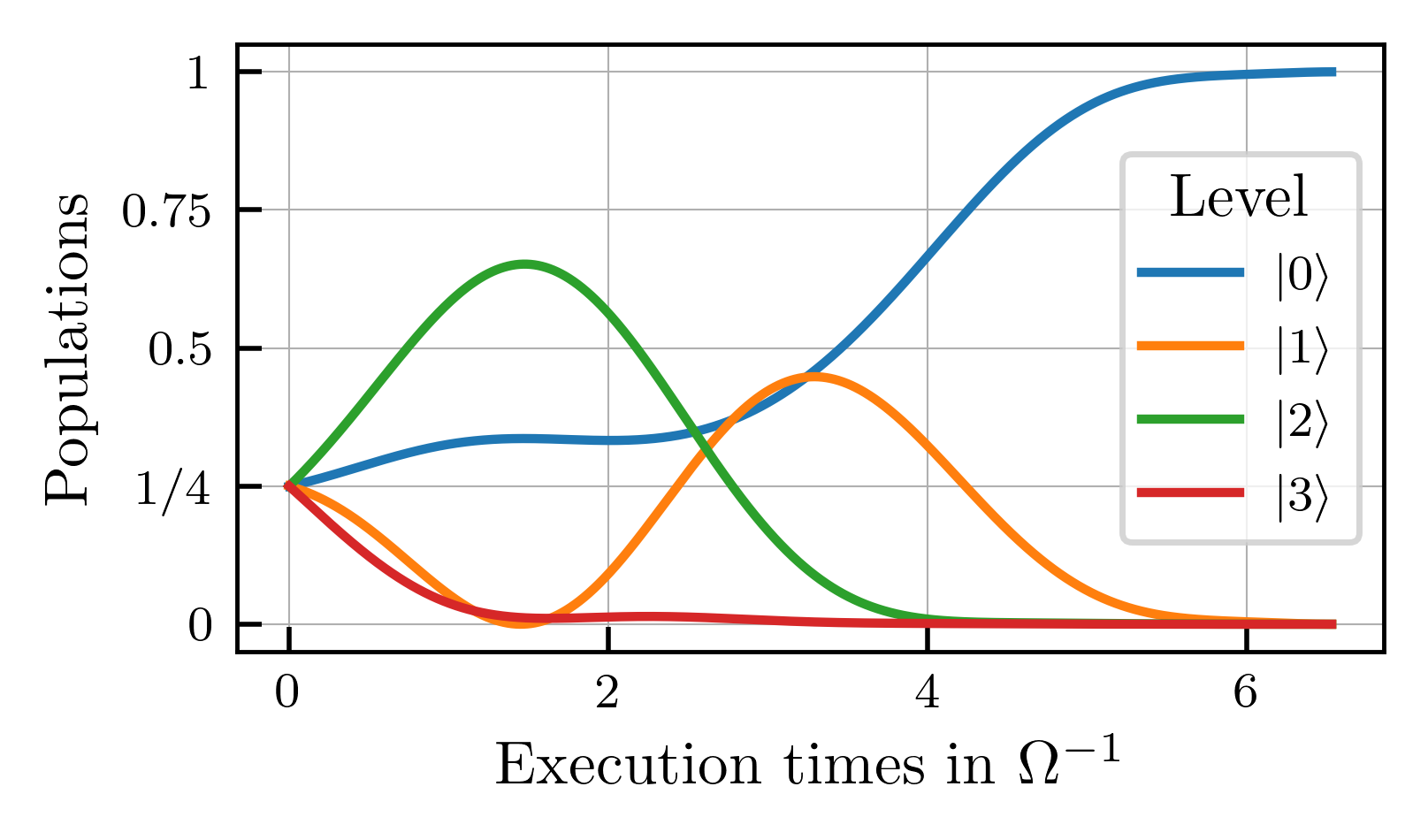}%
        \label{subfig:MAGICARP_pop_H_4}%
    }

    \caption{Control amplitudes (left panels) and population dynamics (right panels) for the implementation of a QFT gate on the double decker graph (see fig. \ref{fig:deckers_graph}a). Panels (a)-(b) correspond to GRD, (c)-(d) to GRAPE, and (e)-(f) to MAGICARP. In (a), each blue activation, delimited by black lines, correspond to a different $(\theta, \phi)$ for a GR in~\cref{eq:GRD}. The length of each blue activation is $\phi$. For GRAPE and MAGICARP, each color denotes a distinct control amplitude. The target fidelity is $10^{-4}$. GRAPE controls are discretized in 50 time steps. All times are given in units of $\Omega^{-1}$ (see \cref{sec:GRDvsGRAPE}).\label{fig:controls_and_pop}}
\end{figure}
The QFT is a key gate for quantum computation, including Shor's algorithm and many other important algorithms, because it is often a good starting point for solving problems in the hidden subgroup problem~\cite[5.4.3]{nielsen00}, a class of problems for which quantum computations can provide exponential speed-ups. It is also referred to in the literature as a generalized Hadamard gate on qudits.
To illustrate the shape of control pulses implementing the QFT for the double-decker, we show in~\cref{fig:controls_and_pop} the pulses calculated using the GRD, GRAPE and MAGICARP algorithms and the addressable energy difference between levels shown in \cref{fig:double_levels}. Since there is 3 edges on this linear graph with one $\sigma_x$ and one $\sigma_y$ control per edge, there are 6 controls in total. The initial state is $\ket{\psi}=\frac{1}{\sqrt{N}}\sum_{i=0}^{N-1}\ket{i}$, a superposition of all the states, so that the populations end up in the state $\ket{0}$. 

In~\cref{fig:controls_4}, notice that z-rotations are applied first to clear the diagonal elements in~\cref{eq:localphases} and that each such rotation is a block of 3 operations. Indeed, they are constructed using controls over the $x$ and $y$ directions and the relation $R_z(\phi) = R_y\left(\frac{\pi}{2}\right) R_x(\phi) R_y\left(-\frac{\pi}{2}\right)$. For the GRD, the same figures are available for the triple decker in~\cref{app:triple_decker_dynamics}. Since the figure of merit $\Omega T_2$ is over the thousands for the double decker~\cite[6.5]{godfr2017}, these pulses are well in the range of experimental setups. 
It is noteworthy that the controls in~\cref{subfig:MAGICARP_amps_H_4} are smoother than those in~\cref{subfig:GRAPE_amps_H_4}. Indeed, the MAGICARP parametrization always yields approximations to smooth controls while the GRAPE method uses by construction piecewise constant pulses which might well have discontinuities.

\subsection{Comparison of MAGICARP relative to GRAPE}~\label{subsec:Magicarp_results}
In this work, for both the GRAPE and MAGICARP methods, we set the target infidelity to $10^{-4}$. We explain in~\cref{sec:1e-4_good_choice_GRAPE} why this is a relevant choice. Instead of optimizing controls directly inside an infinite dimensional function space, the MAGICARP allows for the optimization a finite number of parameters, namely the coefficients of the matrix \( M \) in~\cref{eq:optimal_control}, which are not tied to any discretization scheme but nevertheless determine smooth control functions.
A relevant property of this method is that the associated controls satisfy a necessary (albeit not sufficient) condition for a time-optimal control.
\begin{figure}[tbp]
    \centering
    \subfloat[]{
    \includegraphics[width=0.45\textwidth]{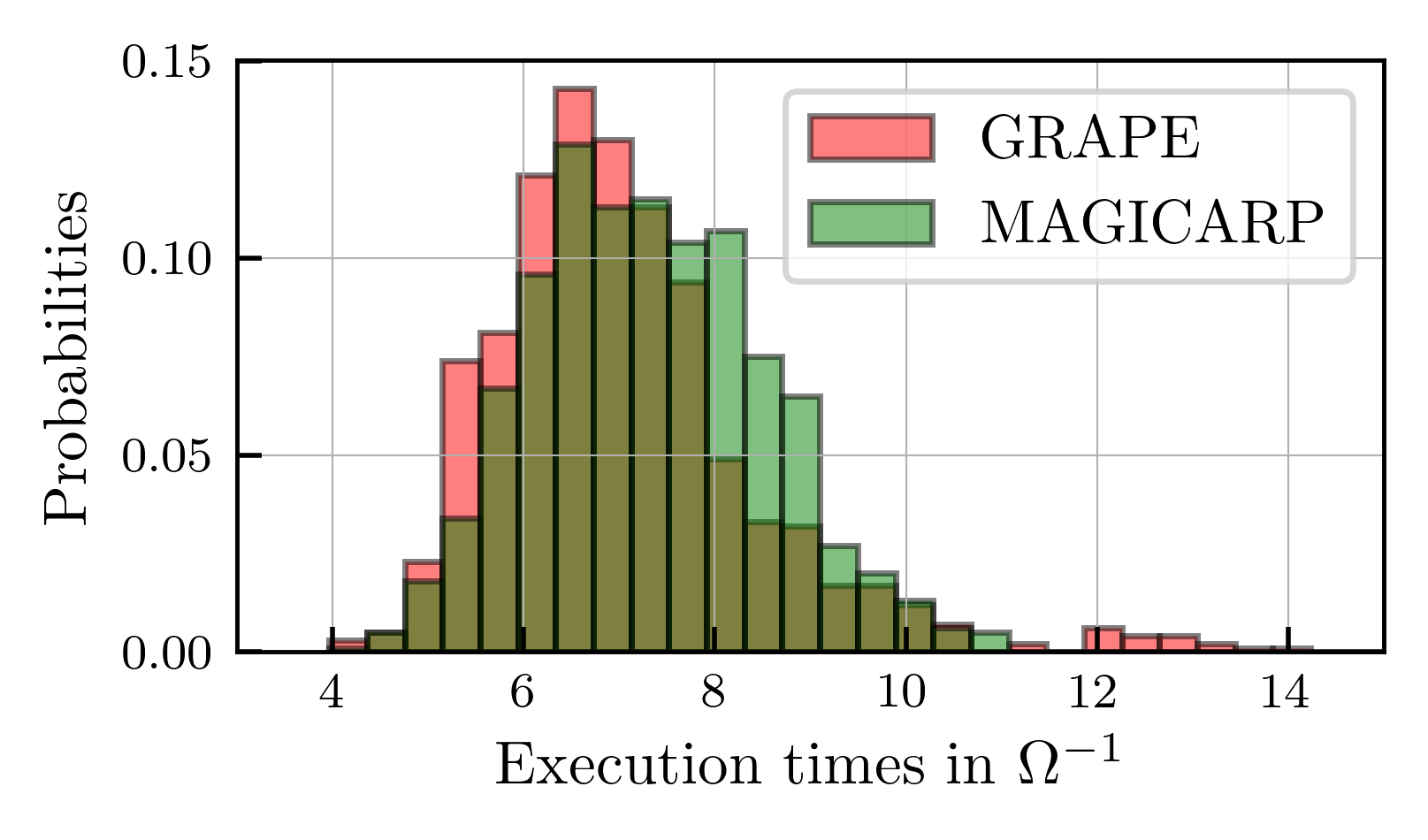}%
    \label{subfig:GRAPE-vs-MAGICARP-double}%
    }
    \hfill
    \subfloat[]{
    \includegraphics[width=0.45\textwidth]{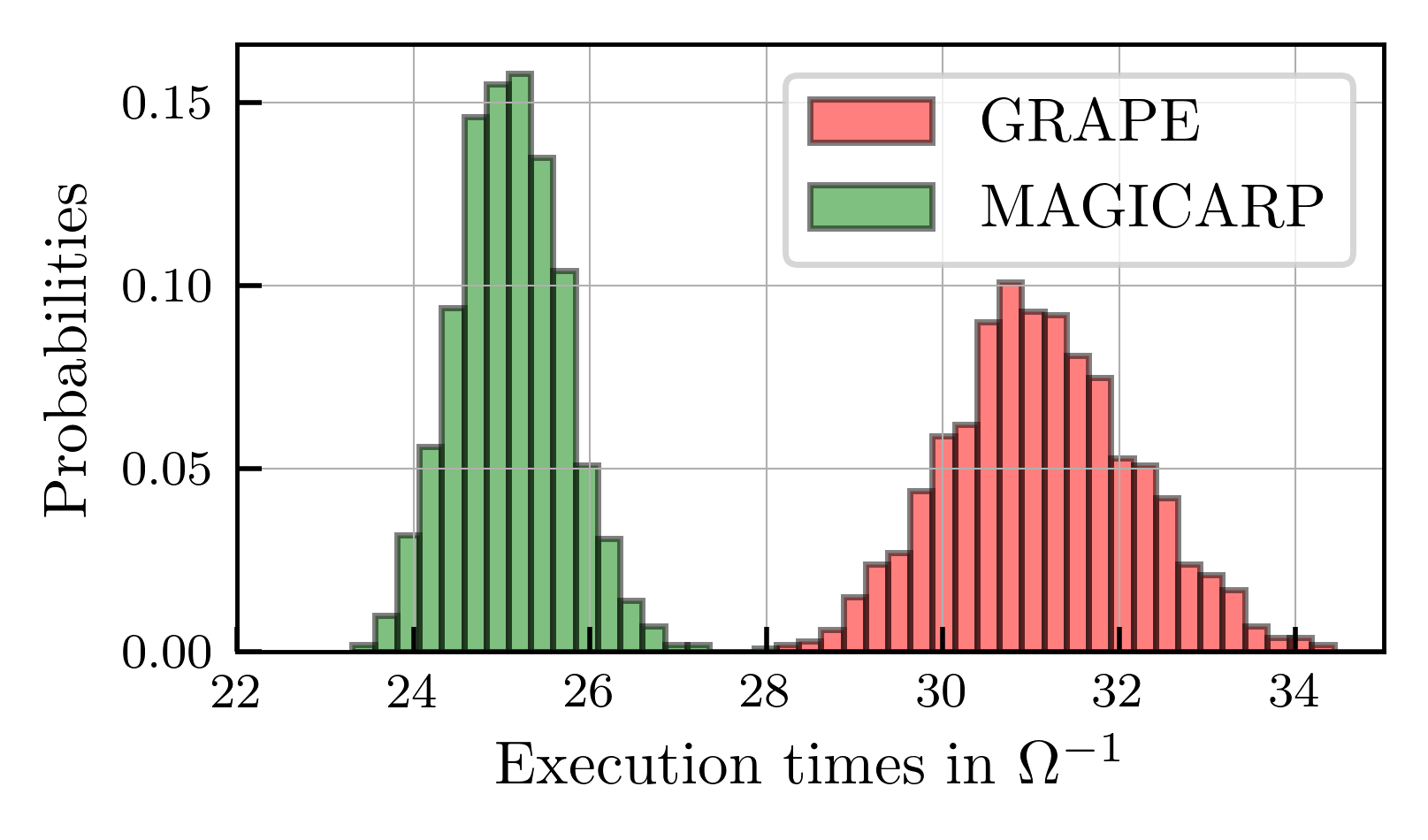}%
    \label{subfig:GRAPE-vs-MAGICARP-triple}%
    }
    \caption{Comparison of GRAPE and MAGICARP execution times to implement 1000 random Haar unitaries.
    (a) For a $\text{Tb}\text{Pc}_2$ molecule.
    (b) For a $\text{Tb}_2\text{Pc}_3$ molecule.}\label{fig:GRAPE-vs-GRD}
\end{figure}

From a pragmatic standpoint, it appears numerically that this parametrization yields shorter execution times than GRAPE, especially as the dimension of the Hilbert space increases. Indeed, for the triple decker in~\cref{subfig:GRAPE-vs-MAGICARP-triple}, the median execution time found by MAGICARP is about 80\% of that found by GRAPE. However, an important point is that numerically optimizing the matrix $M$ in~\cref{eq:optimal_control} is all the more difficult as the size of the Hilbert space grows. In~\cref{app:Numerical_framework}, we discuss an optimizer based on a natural gradient descent which provides solutions below the infidelity threshold for qudits up to $d = 16$ dimensions, e.g.~the triple decker discussed in~\cref{subsec:SMM}. We used this optimizer to obtain the results of~\cref{subfig:GRAPE-vs-MAGICARP-triple}. Codes written in Julia and Python are available on the Github repositories \href{https://github.com/killianlutz/pyMagicarp}{\texttt{Magicarp}} and \href{https://github.com/killianlutz/pyMagicarp}{\texttt{pyMagicarp}}.

Overall, the higher the dimension of the system is, the better the results of the MAGICARP in comparison to GRAPE when comparing median execution times for random gates or the minimum execution time for a specific gate (the QFT) as shown in~\cref{fig:Magicarp_dimensionality}. 
\begin{figure}[tbp]
    \centering
    \subfloat[]{
    \includegraphics[width=0.45\textwidth]{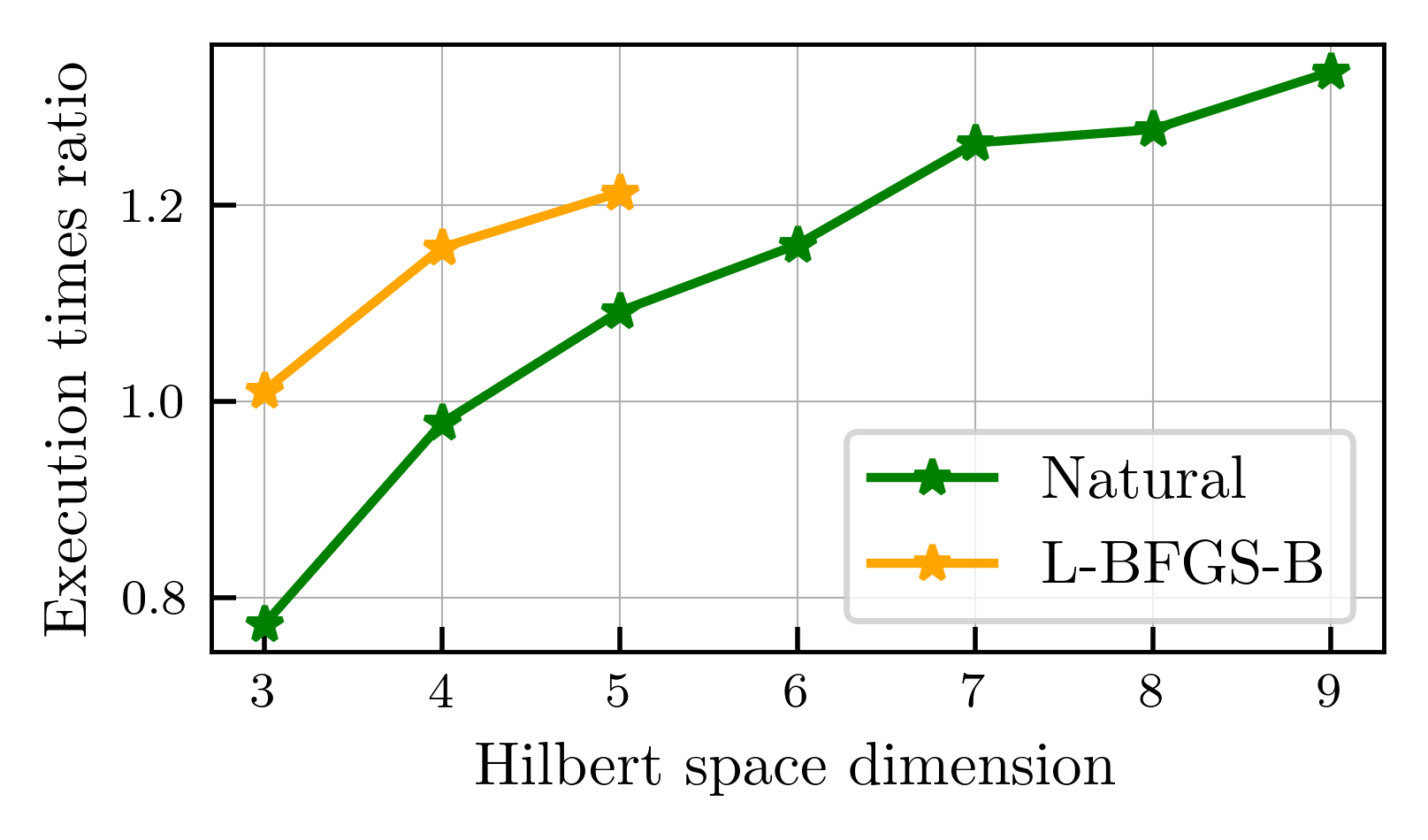}%
    \label{subfig:GRAPE-vs-MAGICARP-median-random}%
    }
    \hfill
    \subfloat[]{
    \includegraphics[width=0.45\textwidth]{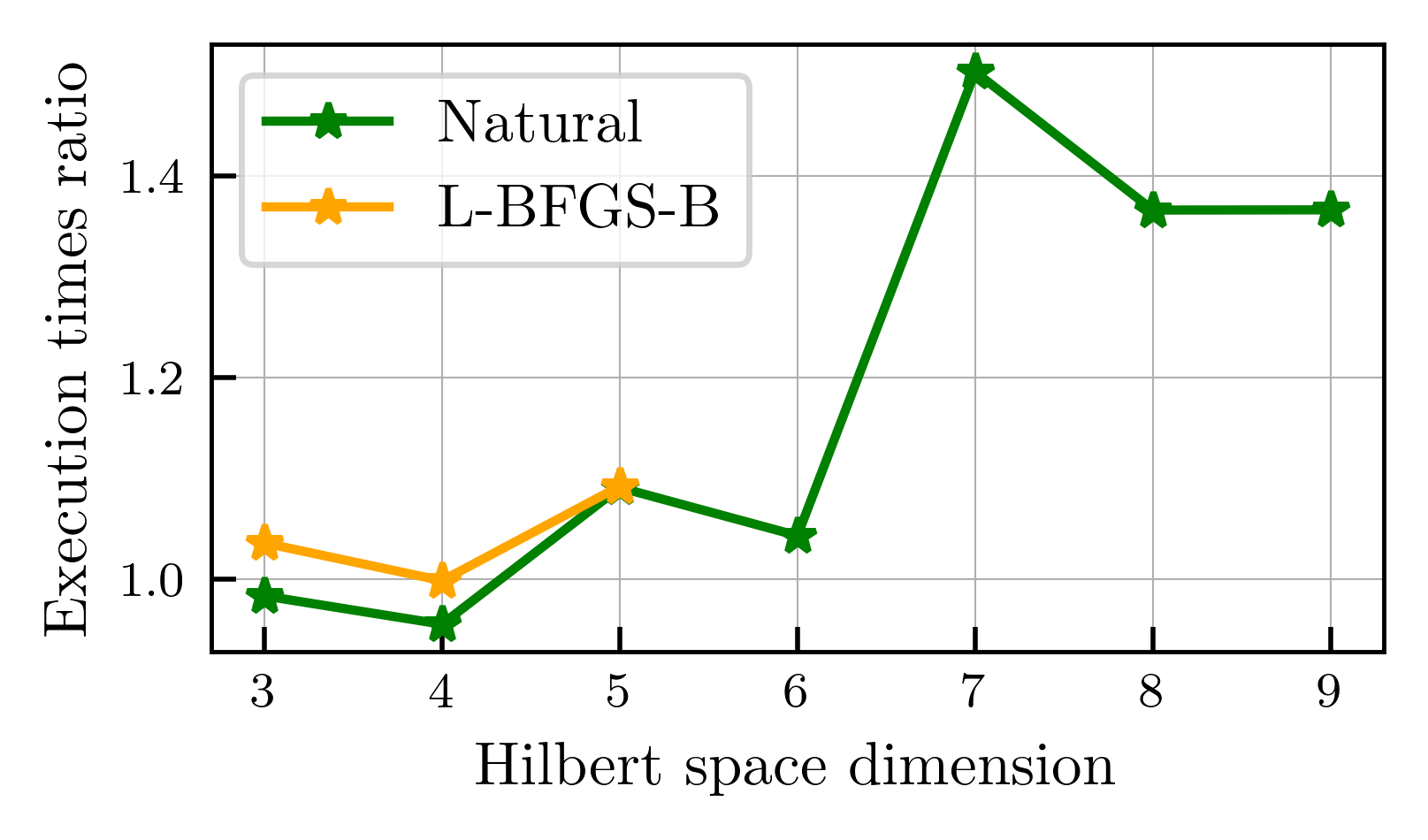}%
    \label{subfig:GRAPE-vs-MAGICARP-min-QFT}%
    }
    \caption{Comparison of GRAPE and MAGICARP execution times on linear graphs with 3 to 9 nodes, as a function of the Hilbert space dimension. There is at least 100 initializations for each dimension. We used the natural gradient descent described in~\cref{app:Numerical_framework} for the green curves, and for systems of dimension below 6, we also used the more computationally expensive L-BFGS-B optimizer from Scipy~\cite{SciPy} described by the orange curve. The ratios indicate how much faster MAGICARP is compared to GRAPE for a given dimension.\
    (a) Ratio of the median execution times for random Haar gates using GRAPE with random initializations over the median time using MAGICARP.\
    (b) Ratio of the minimum execution times for a QFT gate using GRAPE with random initializations over the minimum time using MAGICARP.}\label{fig:Magicarp_dimensionality}
\end{figure}
We choose to use a QFT gate to compare the minimum execution time for the MAGICARP and the GRAPE method for three reasons: (i) this gate is not trivial to generate given the experimental constraints, (ii) is key to several important quantum algorithms as explained in~\cref{subsubsec:QFT_on_SMM}, and (iii) has execution times similar to a random gate (gaussian-like) as one can see when comparing~\cref{subfig:GRAPE-vs-MAGICARP-triple} and~\cref{subfig:GRAPE-vs-MAGICARP-H}. 

We observe for low-dimensional systems ($d = 3$ and 4) that the natural gradient optimizer does not yield faster pulses since the ratios of execution times are below 1 in~\cref{fig:Magicarp_dimensionality} for the corresponding curve. To assess whether this observation should be attributed to the MAGICARP parametrization itself or to the optimizer we chose, we optimize once more the matrix $M$ using the same optimizer and hyperparameters as GRAPE (L-BFGS-B method from SciPy~\cite{L-BFGS-B-Scipy, SciPy}). By doing so, we observe that MAGICARP is at least faster than GRAPE for systems of dimension $d \leq 5$, both in median time for random gates and in minimum time for the QFT gate. This observation illustrates that the advantage of MAGICARP over GRAPE discussed in this paper leaves room for further improvements of the execution time by carefully designing the algorithm used to optimize the parameter of MAGICARP.

Comparing these methods only by implementing target gates chosen uniformly at random in the unitary group provides insights into some kind of improvement of the execution time ``on average''. However, in practice, not all unitary matrices are of interest for the existing quantum algorithms. We therefore compare the GRAPE and MAGICARP methods on the triple decker ($d = 16$) by implementing four commonly used quantum gates: QFT, T, X and SUMX~\cite{qudit_concept}.\ The latter generalizes the CNOT gate to qudits.
We obtained significantly better results in~\cref{fig:GRAPE-vs-GRD-specific} for the MAGICARP, regarding minimum execution times. In fact, for the QFT gate in~\cref{subfig:GRAPE-vs-MAGICARP-H}, the smallest execution time found by MAGICARP is about 74\% of the time found by GRAPE.
\begin{figure}[tbp]
\centering
\subfloat[]{
\includegraphics[width=0.45\textwidth]{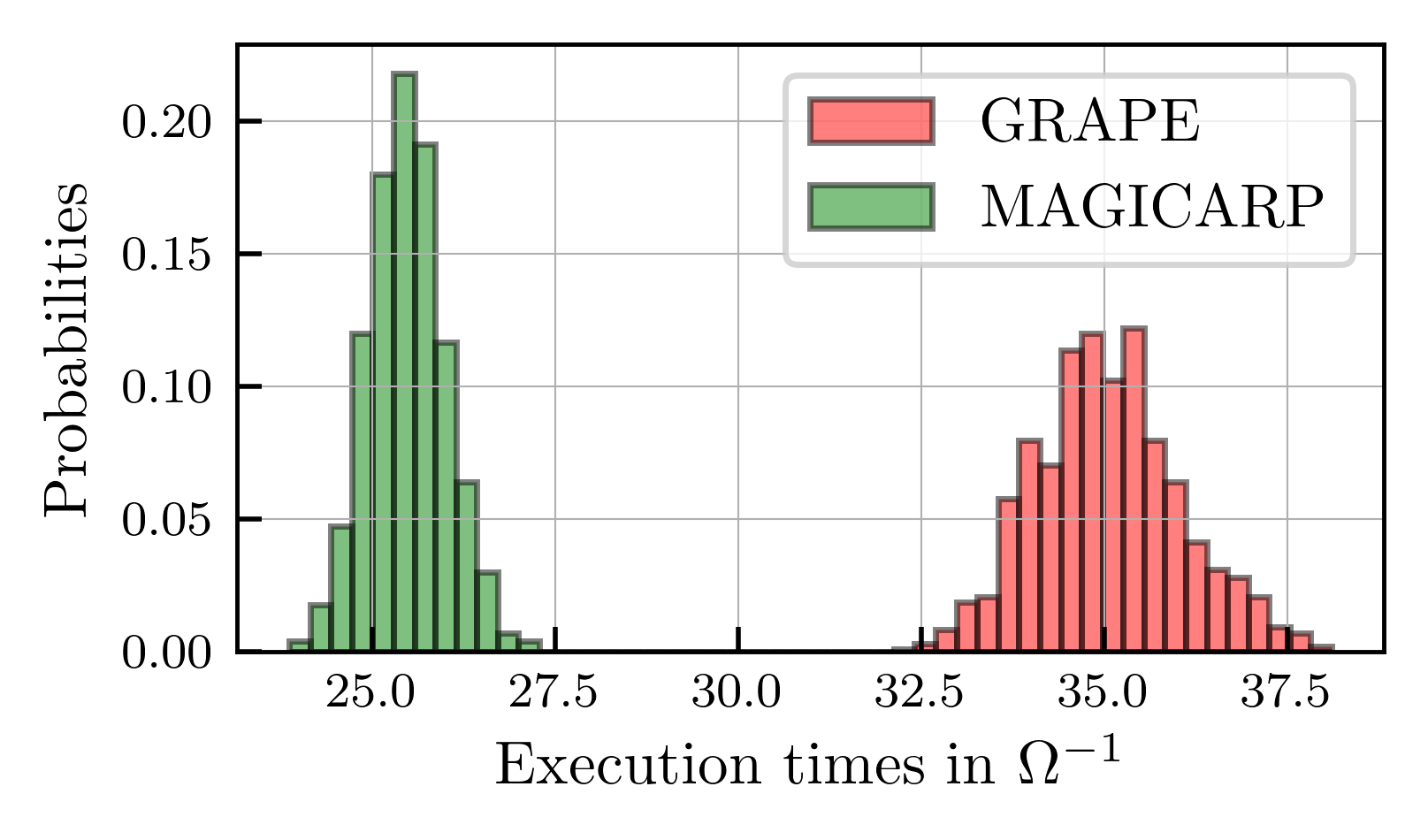}%
\label{subfig:GRAPE-vs-MAGICARP-H}%
}
\hfill
\subfloat[]{
\includegraphics[width=0.45\textwidth]{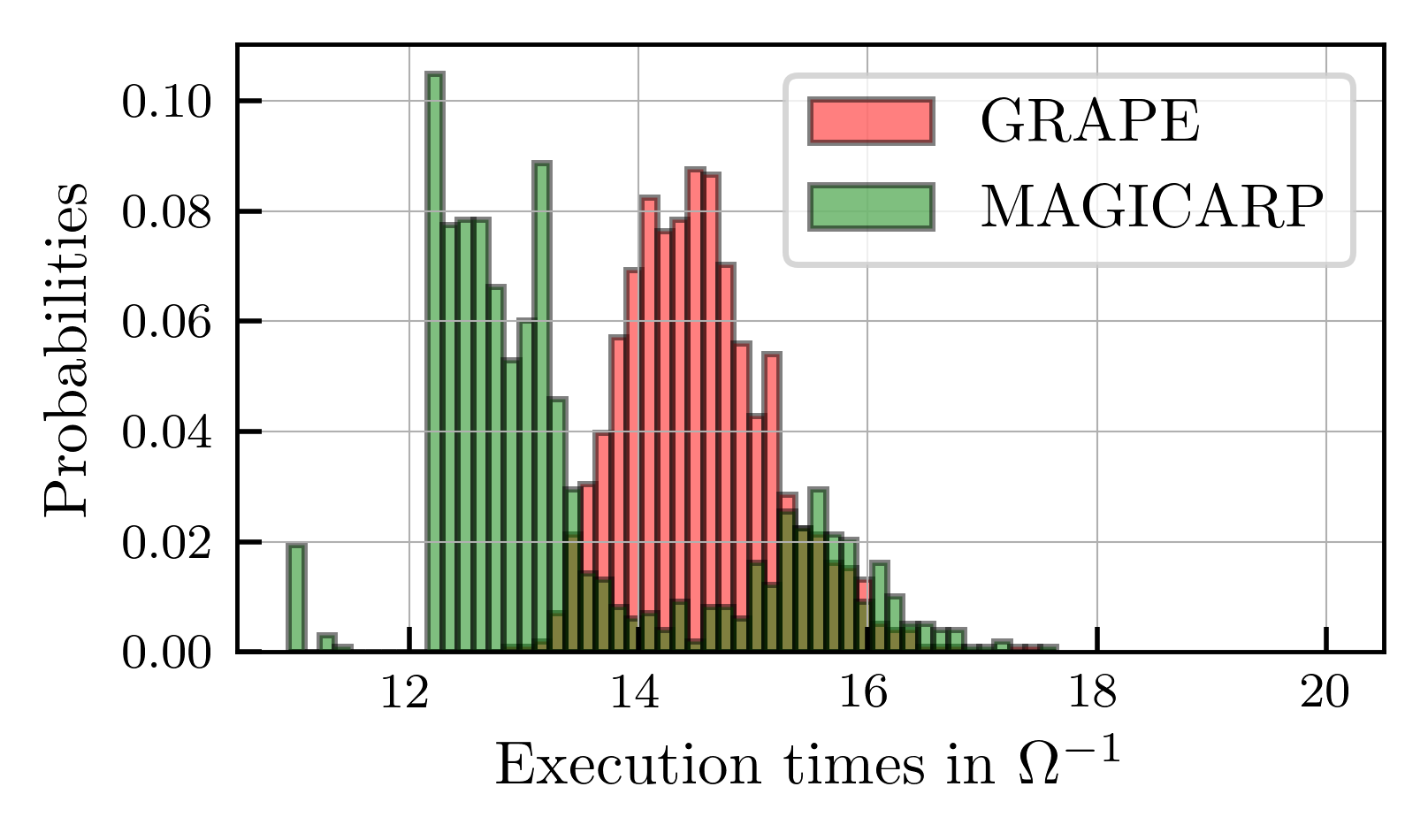}%
\label{subfig:GRAPE-vs-MAGICARP-T}%
}
\vspace{0.5em}
\subfloat[]{
\includegraphics[width=0.45\textwidth]{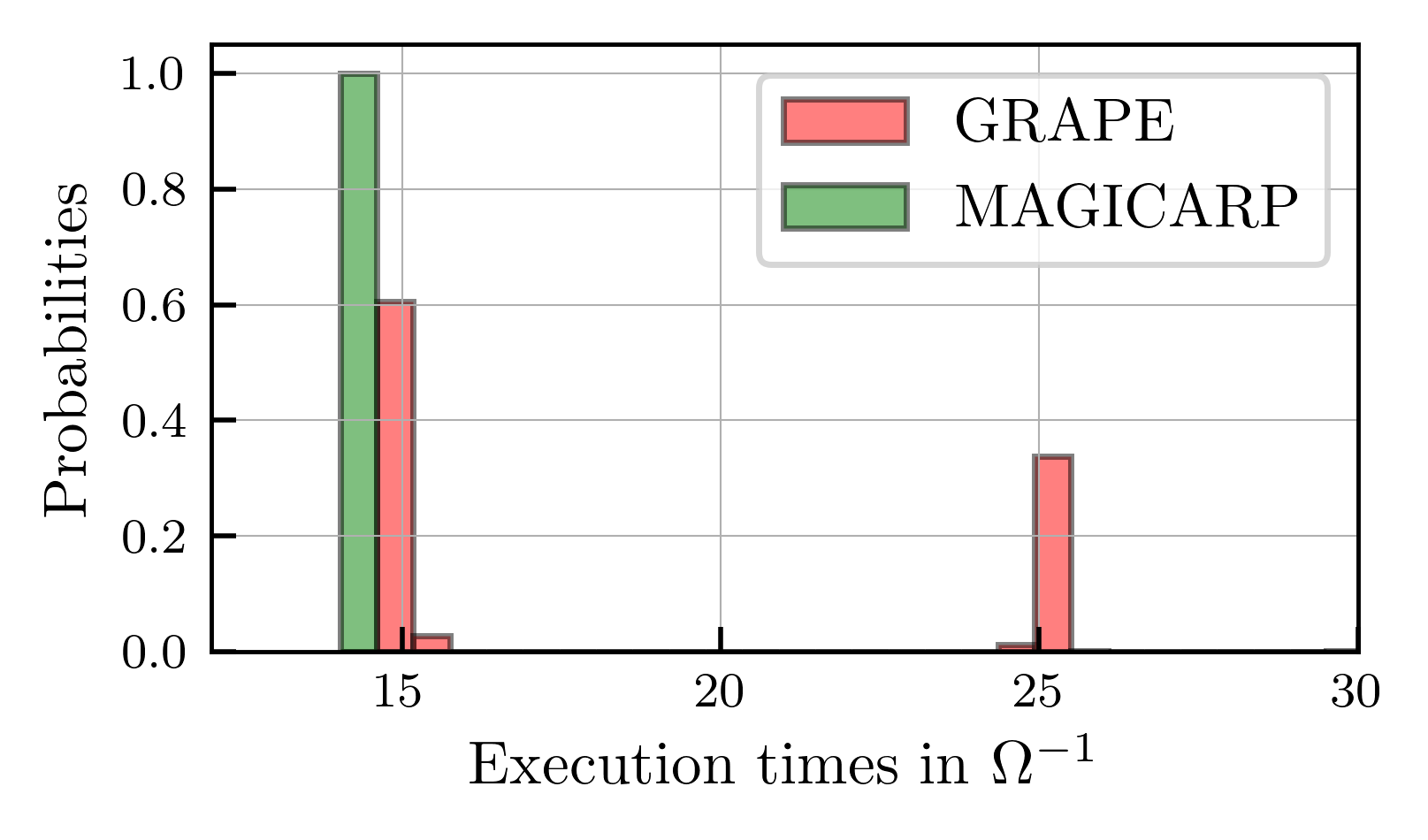}%
\label{subfig:GRAPE-vs-MAGICARP-X}%
}
\hfill
\subfloat[]{
\includegraphics[width=0.45\textwidth]{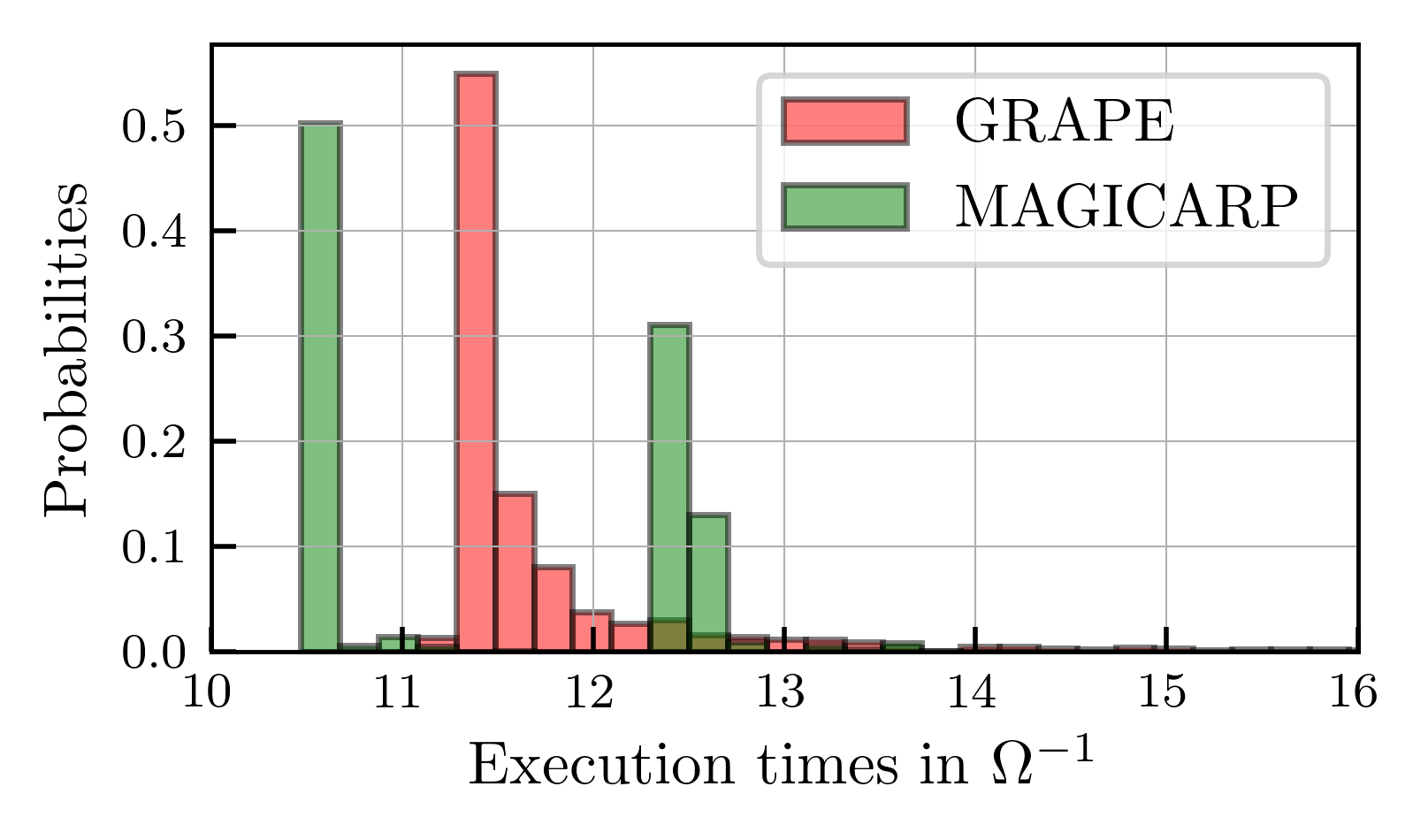}%
\label{subfig:GRAPE-vs-MAGICARP-SUMX}%
}
\caption{Comparison of GRAPE and MAGICARP execution times for 1000 random initializations on a $\text{Tb}_2\text{Pc}_3$ molecule.
(a) QFT gate.
(b) T gate.
(c) X gate.
(d) SUMX gate.}\label{fig:GRAPE-vs-GRD-specific}
\end{figure}

\section{Conclusion and perspectives}

This work presents a numerical proof of concept for the design of smooth quantum control pulses using a parametrization derived from the Pontryagin maximum principle. The resulting method, termed MAGICARP, is applied to the implementation of quantum gates in qudit systems while neglecting the drift Hamiltonian. Compared with the standard GRAPE approach, MAGICARP makes use of the geometry of the unitary group together with the system transition graph to achieve substantially shorter execution times for qudit systems of dimension up to $d = 16$. This advantage becomes increasingly pronounced as the system dimension grows.

The method is demonstrated on the $d=4$ and $d=16$  single-molecule magnets $\text{TbPc}_2$ and $\text{Tb}_2\text{Pc}_3$, respectively, while explicitly accounting for several experimentally relevant constraints specific to these platforms. Indeed, to implement uniformly sampled unitary gates  on the triple-decker molecule, the median execution time found by MAGICARP is about 20\% faster relative to the one found by GRAPE. Similarly, in the case of the QFT gate run on the same system, the minimum gate time obtained with the MAGICARP algorithm  was roughly 74\% of that from the GRAPE method.

Several promising directions for future research deserve further investigation. A first natural step would be to experimentally validate the effectiveness of the control pulses derived from MAGICARP.  Our method may provide an efficient quantum time-optimal control framework for a wide range of finite-dimensional controllable quantum systems~\cite{Quantum_optimal_control_review2}, including NV centers~\cite{diamond_centers}, NMR platforms~\cite{NMR}, trapped ions~\cite{trapped_ions}, and Rydberg atoms. Another important extension would be to generalize the MAGICARP parametrization framework to closed quantum systems by accounting explicitly for the drift Hamiltonian, as well as to open quantum systems~\cite{open_loop_review}. This could be achieved either by explicitly incorporating decoherence effects into the cost functional~\cite{PhysRevA.77.032117,Hartmann2025nonlinearityof}, or by including them directly in the numerical evolution~\cite{PhysRevA.99.052327,Le2026}.

Further progress may also come from identifying optimization and parametrization strategies capable of enhancing MAGICARP. Promising directions include the development of more efficient gradient-based optimization algorithms, the use of multiple-shooting techniques~\cite{A_time-parallel_multiple-shooting_method}, and the implementation of scalable high-dimensional quantum trajectory solvers~\cite{doi:10.1137/24M1690795}. It will also be important to clarify the precise regime in which MAGICARP retains its advantage, particularly in the presence of model mismatch and experimental imperfections such as control-amplitude errors, frequency detuning, and uncertainties in the Hamiltonian parameters~\cite{instrumental_distortions,Fisher2011OptimalControl_and_instrumental_distortions}.

Another promising avenue for future work is to examine how the connectivity pattern of the graph corresponding to the energy levels of different physical systems affects the method’s efficiency, particularly in terms of execution time. More broadly, future work could examine whether the MAGICARP parametrization reveals interpretable geometric features of efficient quantum control mechanisms, such as simpler trajectories in unitary space or recurring structures in near-time-optimal pulse shapes.


\section*{Acknowledgements} The authors would like to express their gratitude to Benjamin Bakri for their fruitful discussions.
The authors would like to acknowledge the High Performance Computing Center of the University of Strasbourg for supporting this work by providing scientific support and access to computing resources. Part of the computing resources were funded by the Equipex Equip@Meso project (Programme Investissements d'Avenir) and the CPER Alsacalcul/Big Data.
This work of the Interdisciplinary Thematic Institute QMat, as part of the ITI 2021-2028 program of the University of Strasbourg, CNRS and Inserm, was supported by IdEx Unistra (ANR-10-IDEX-0002), and by SFRI-STRAT’US project (ANR 20 SIFRI 0012) and EUR QMAT (QMAT ANR-17-EURE-0024) under the framework of the French Investments for the Future Program. During the writing and submission process, D.J. acknowledges the additional financial support from the Institute for Basic Science, Republic of Korea (IBS-R027-D1) and J-G.H. acknowledges the additional financial support of the CNRS UMR-7177 and Fondation Jean-Marie Lehn. This work was carried out as part of the Inria Exploratory Action MARCQ, supported by Groupe La Poste, a patron of the Inria Foundation.

\newpage
\appendix
\section{MAGICARP}  
\subsection{Importance of the coupling graph}
In this work, the numerical experiments were done assuming either the linearly coupled graph of the double decker in~\cref{fig:double_levels} or the coupling graph of the triple decker in~\cref{fig:graph}. 
Since the coupling graph depends on the choice of physical molecules and platforms, the execution times for a given gate are likely to differ and to depend on the Hilbert space dimension. 

The simplest configuration is the linear graph in which each level is only coupled to its nearest neighbors. At the other end of the spectrum, there is the fully connected graph in which each level is coupled to every other levels. We can assess how the GRAPE and MAGICARP methods take advantage of the growing connectivity of the graph by comparing in~\cref{fig:connectivity} each method with itself in two situations: when the graph is fully connected and when the the graph is linearly coupled. 

In~\cref{subfig:connectivity-QFT}, the ratios of execution times are above 1 which means that both GRAPE and MAGICARP consistently benefit from a fully connected graph. This is reasonable because, the more edges in the graph, the more available controls. However this improvement appears to be all the more significant as the Hilbert space dimension increases. 

For gates sampled uniformly at random in~\cref{subfig:connectivity-random}, MAGICARP slightly benefits from the use of a complete graph instead of a linear one (for every dimension, gates are in median time around 20\% faster) but not as much as GRAPE does. This suggest that MAGICARP is able to better exploit sparsely connected graphs.
\begin{figure}[hbt]
    \centering
    \subfloat[\label{subfig:connectivity-random}]{%
        \includegraphics[width=0.45\textwidth]{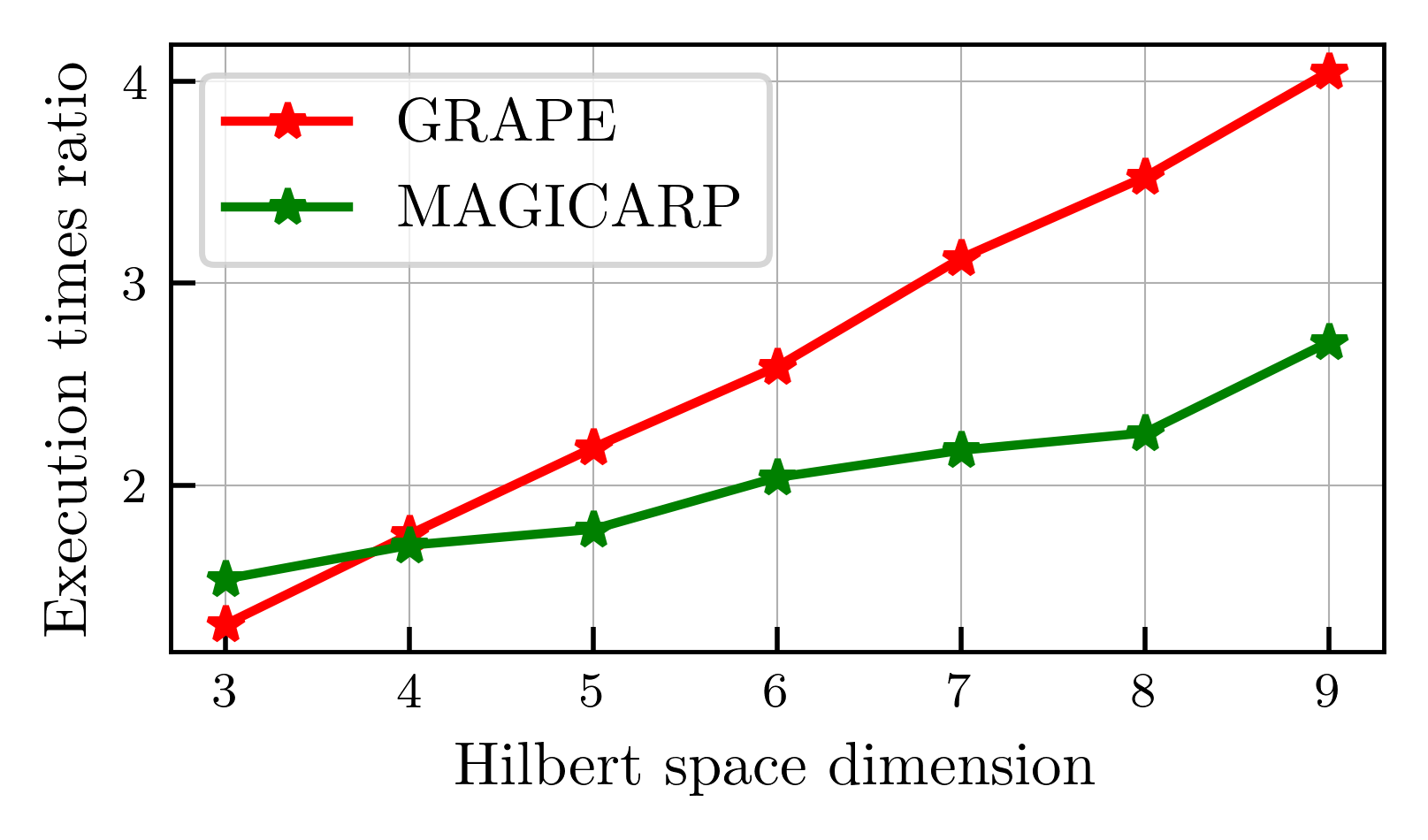}%
    }
    \hfill
    \subfloat[\label{subfig:connectivity-QFT}]{%
        \includegraphics[width=0.45\textwidth]{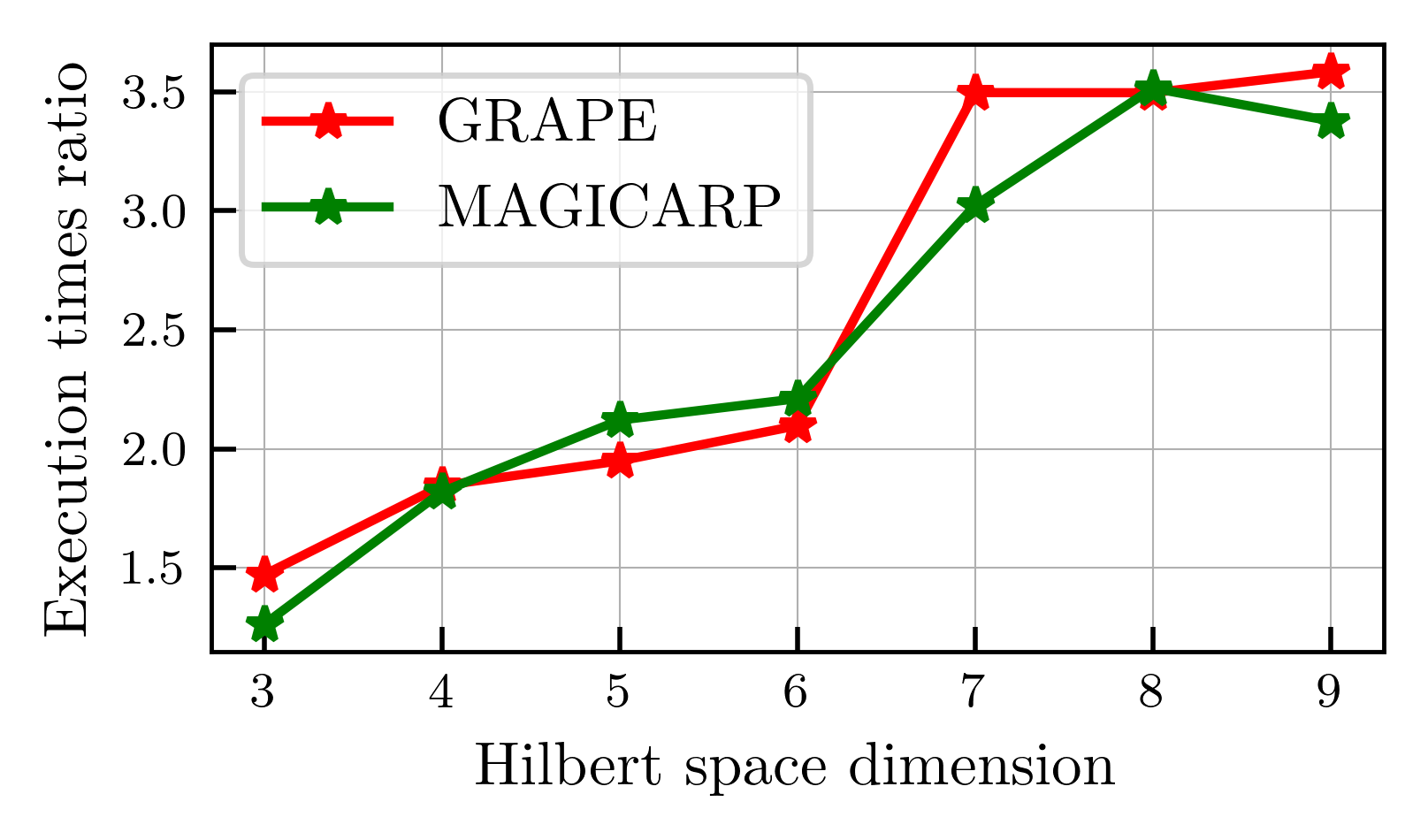}%
    }
    \caption{Comparison of the median (a) or the minimum time (b) to implement at least a $10^2$ gates with infidelity below $10^{-4}$. For a given gate, each of the two methods is compared to itself assuming either a complete graph or a linear graph. The graphs have from 3 to 9 nodes depending on the Hilbert space dimension but given a dimension, the connectedness varies. For each method, GRAPE or MAGICARP, the quantity shown is the ratio between the execution time on a linear graph and the runtime on a complete graph, effectively describing how each method is faster when being run on a complete graph instead of a linear one.\ (a) Ratio of the median execution times for $10^{2}$ random Haar gates.\ (b) Ratio of the minimum execution times for the QFT gate.}\label{fig:connectivity}
\end{figure}

\subsection{Importance of a \texorpdfstring{$\sigma_z$}{sigma\_z} control}
In this work, the numerical simulations were done assuming we have access to a control on $\sigma_x$ and $\sigma_y$ for each edge between two levels of the Hilbert space, similar to~\cref{eq:godfrin}, but no control on $\sigma_z$ because for experimental reasons this direction is not available on some physical platforms, especially those based on SMMs.

To assess how much of an improvement we can hope for by lifting this experimental constraint on the control directions, we compare how GRAPE and MAGICARP take advantage of having a control $\sigma_z$ depending on the coupling graph. This is what we test for in~\cref{fig:sigmaz_available}.

For each method and almost all coupling graphs, we observe that there is little improvement in both the median time of~\cref{subfig:sigmaz_available-random} and the minimum time of~\cref{subfig:sigmaz_available-QFT} for the QFT gate. However, on a fully connected graph and for random target gates, MAGICARP appears to benefit significantly from a $\sigma_z$ control, especially as the dimension increases.
\begin{figure}[hbt]
    \centering
    \subfloat[\label{subfig:sigmaz_available-random}]{%
        \includegraphics[width=0.45\textwidth]{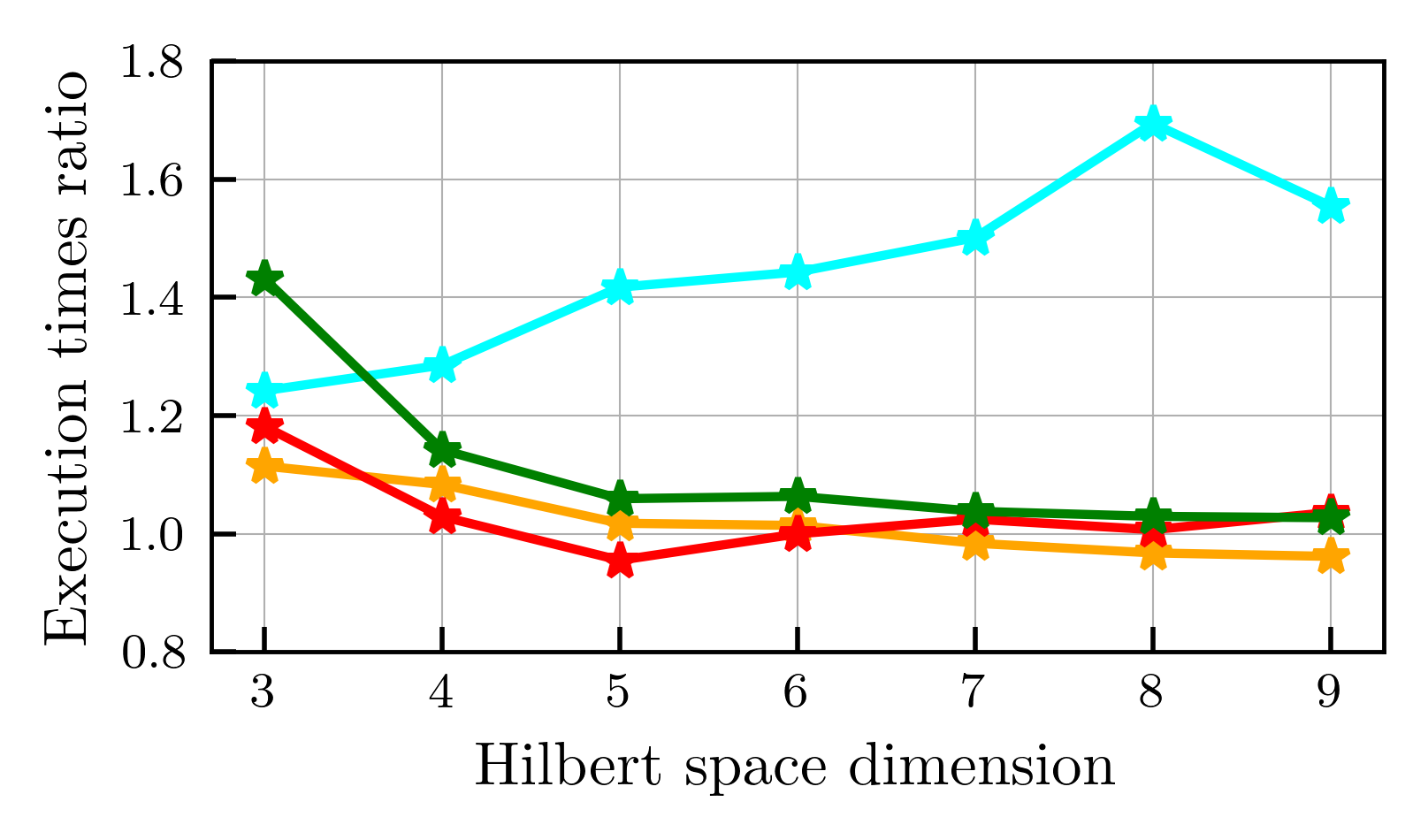}%
    }
    \hfill
    \subfloat[\label{subfig:sigmaz_available-QFT}]{%
        \includegraphics[width=0.45\textwidth]{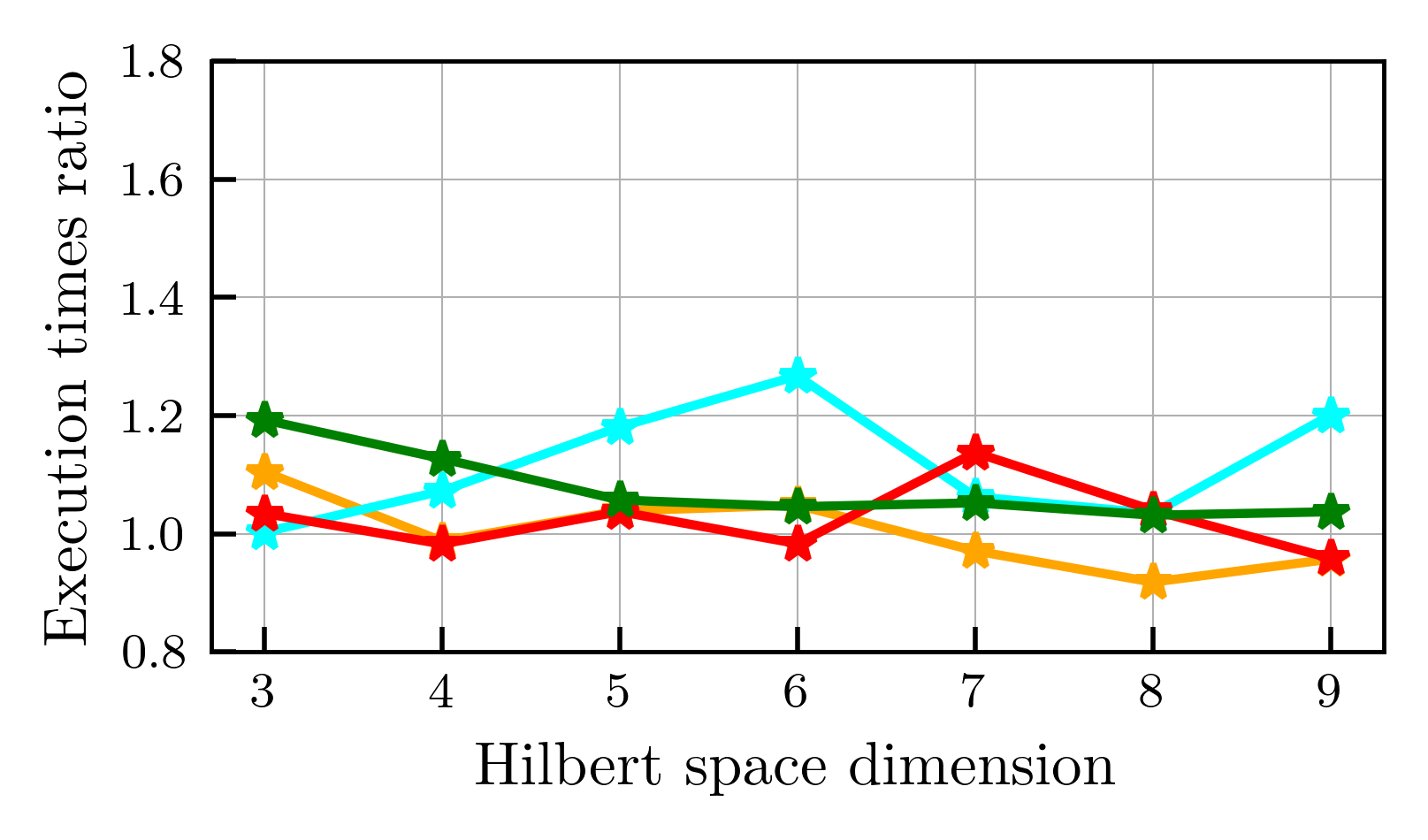}%
    }
    \caption{
    Comparison of the median (a) or the minimum time (b) to implement at least $10^{2}$ gates with infidelity below $10^{-4}$ using MAGICARP and GRAPE. The coupling graph is either linear or fully connected and its number of nodes is the dimension of the Hilbert space.
    For each method, the quantity shown is the ratio between the execution time with controls $(\sigma_x, \sigma_y)$ and and the execution time with controls $(\sigma_x, \sigma_y, \sigma_z)$. This quantity describes how much an additional control direction lowers the execution time depending on the coupling graph.
    Red: GRAPE linear graph. Orange: GRAPE fully connected graph. Green: MAGICARP linear graph. Cyan: MAGICARP fully connected graph.
    (a) Ratio of the median execution times for $10^{2}$ random Haar gates. (b) Ratio of the minimum execution times for the QFT gate.
    }\label{fig:sigmaz_available}
\end{figure}

\label{app:Benchmark}
\section{Benchmark}
\subsection{Improvement of GRAPE with the target infidelity}\label{sec:1e-4_good_choice_GRAPE}
The execution times found by GRAPE depend on the target infidelity threshold, as illustrated by~\cref{fig:GRAPE_fid}. In general, the higher this threshold, the lower the execution time because a high target infidelity means fewer constraints on the control pulses.
In this section, we discuss why we chose to compare execution times at $10^{-4}$ target infidelity.
\begin{figure}[hbt]
    \centering
    \subfloat[]{%
        \includegraphics[width=0.45\textwidth]{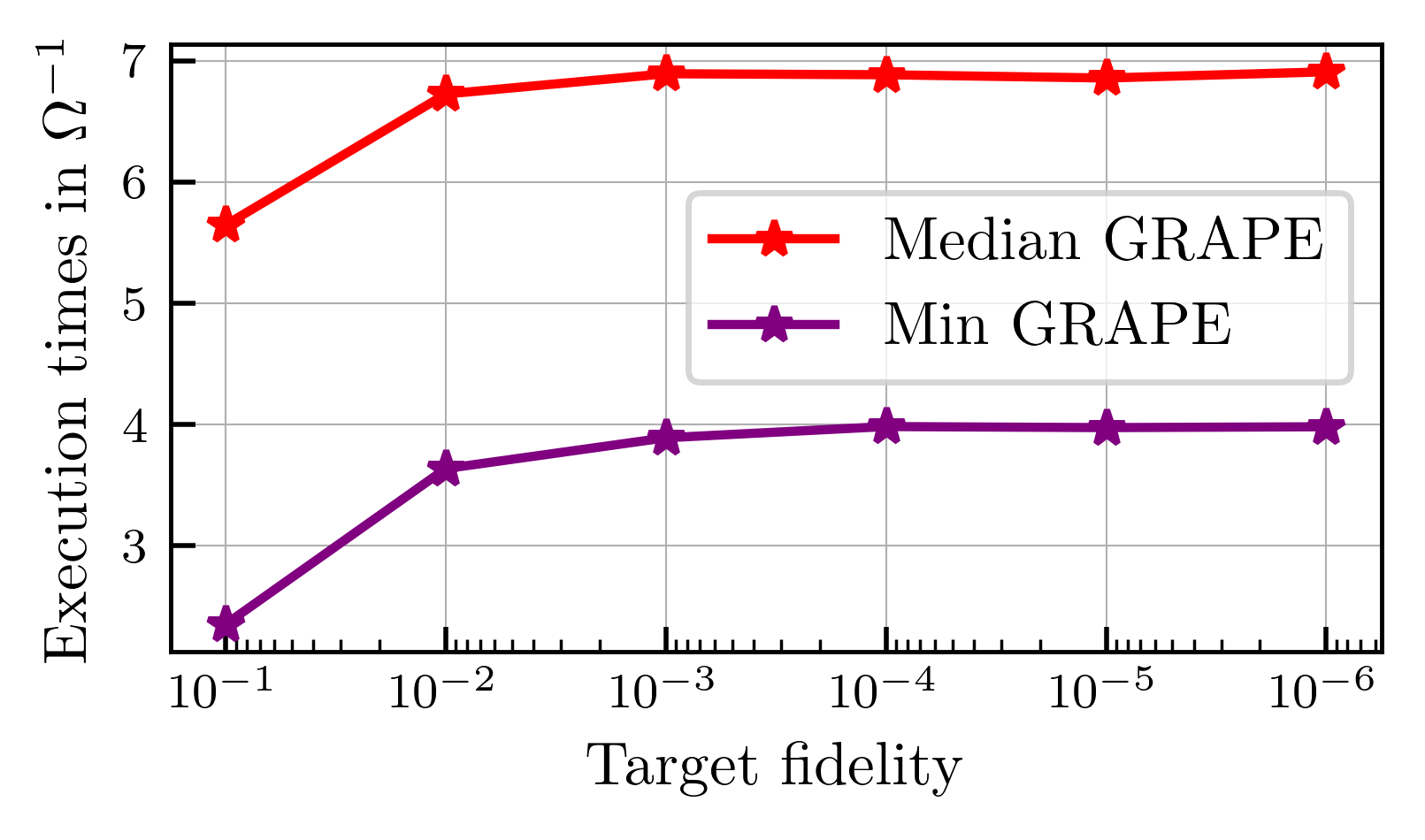}%
    }
    \hfill
    \subfloat[]{%
        \includegraphics[width=0.45\textwidth]{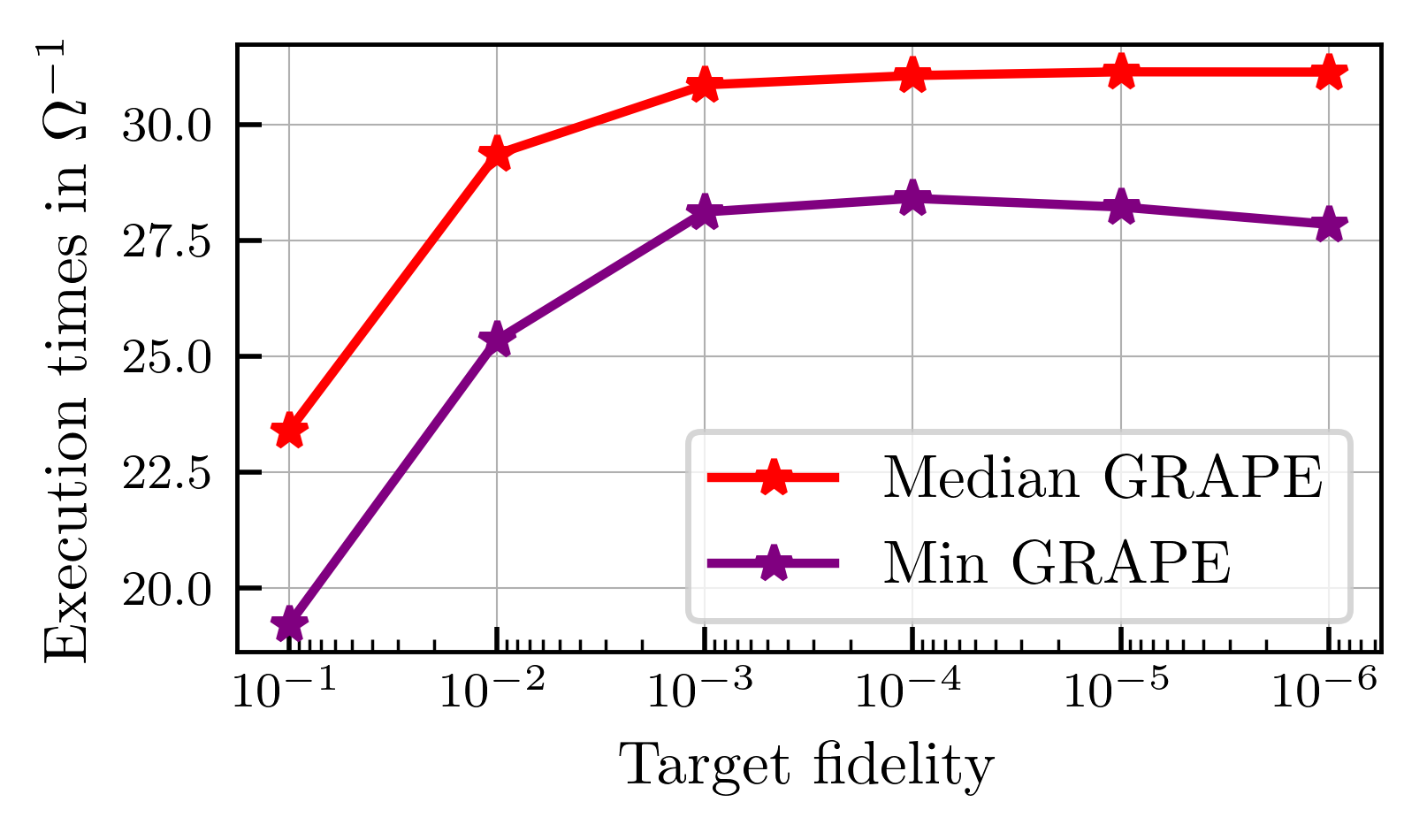}%
    }
    \caption{Comparison between the median (a) and minimum time (b) found by GRAPE to implement the \emph{same} $10^{3}$ random Haar gates with different target infidelity levels. (a) Double decker (linearly coupled) graph with a Hilbert space dimension $d = 4$. (b) Triple decker graph with a Hilbert space dimension $d = 16$.}\label{fig:GRAPE_fid}
\end{figure}
First, this level of accuracy allows for shorter execution times while corresponding in several platforms to the order of magnitude of the errors induced by the noise~\cite{1e-4_fid_is_good}. Second, for high-dimensional systems, finding more accurate pulses in the MAGICARP framework proved difficult because it required significantly more iterations of the optimization algorithm we used. Third,~\cref{fig:GRAPE_fid} suggests that the execution times found by GRAPE at target infidelities $\leq 10^{-4}$ are not signficicantly slower nor faster than the execution time found when the target infidelity is set to $10^{-4}$. 

\subsection{Improvement of GRAPE with the number of control pieces}
The GRAPE method relies on piecewise constant approximations of control pulses. This means that the more pieces are available, the more accurate the approximation can be. In particular, to some extent, the execution time found by GRAPE shortens as the number of pieces increases. In this work, we have chosen the number of pieces per control pulse to be $10(d + 1)$ where $d$ is the dimension of the Hilbert space because we found empirically that choosing a higher number of pieces only allows for negligible improvements of the execution time, as illustrated by~\cref{fig:GRAPE_n_ts}. 

\begin{figure}[hbt]
    \centering
    \subfloat[]{%
        \includegraphics[width=0.45\textwidth]{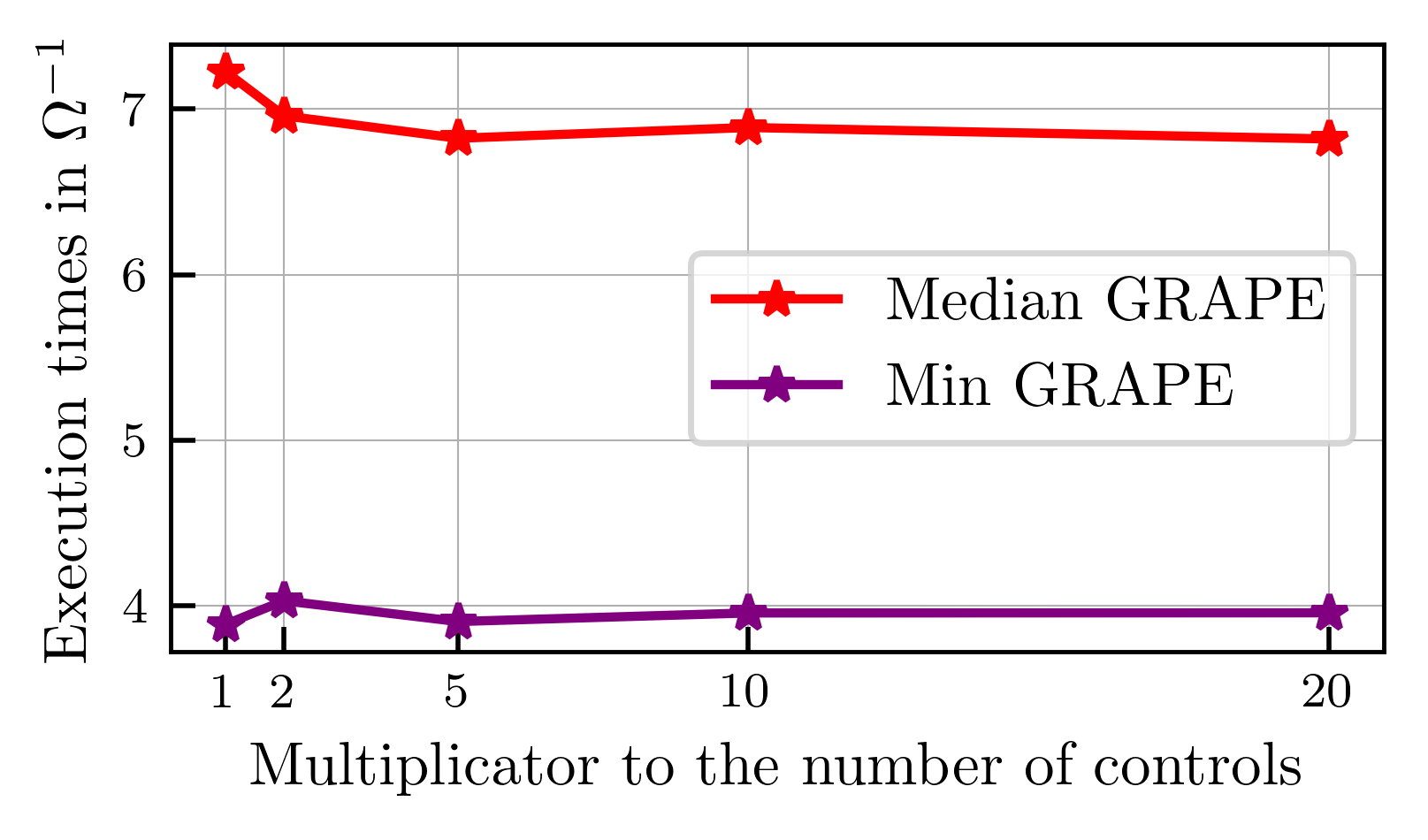}%
    }
    \hfill
    \subfloat[]{%
        \includegraphics[width=0.45\textwidth]{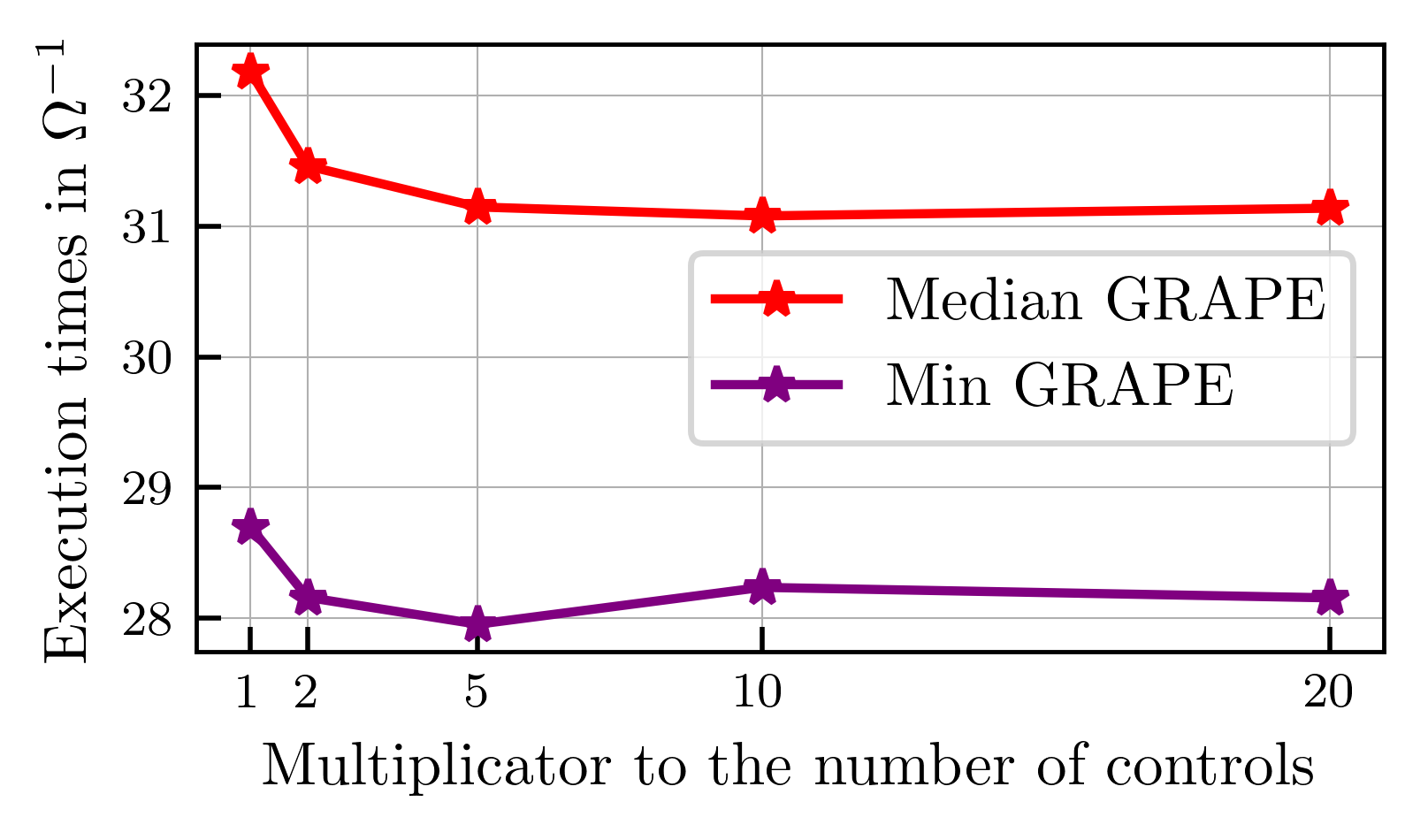}%
    }
    \caption{
    Comparison between the median and minimum time (b) to implement the \emph{same} $10^{3}$ random Haar gates using different number of pieces for the piecewise constant pulses optimized by the GRAPE method. The $x$-axis reflects the number of pieces which, given the dimension $d$ of the Hilbert space, is set to $x(d + 1)$. The target infidelity is $10^{4}$. (a) Double decker (linearly coupled) graph with a Hilbert space dimension $d = 4$. (b) Triple decker graph with a Hilbert space dimension $d = 16$.}
    \label{fig:GRAPE_n_ts}
\end{figure}
\section{Numerical framework}~\label{app:Numerical_framework}

\paragraph{Objectives.} Our objectives in this section is twofold. 
We provide closed-form expressions for the derivative of the map that associates to every control parameter $M$ the final propagator of Schrödinger's equation. We outline its calculation which relies on the so-called adjoint state method. This provides the tools for implementing any first-order optimization routine and to adapt the $\z$ parametrization of controls to a different problem. On the other hand, we discuss our implementation of the natural gradient descent, an optimization routine used to produce some of the figures in this paper. The corresponding Julia and Python codes are available on the Github repositories \href{https://github.com/killianlutz/pyMagicarp}{\texttt{Magicarp}} and \href{https://github.com/killianlutz/pyMagicarp}{\texttt{pyMagicarp}}.

\subsection{Two equivalent optimal control problems in the driftless case}
We denote the set of traceless Hermitian matrices of order $d$ by $\iu \mathfrak{su}(d)$.
\begin{assumption}[Target]
    $$\target \in SU(d), \quad \target \neq I$$
\end{assumption}
\begin{assumption}[Orthonormal control Hamiltonians]\label{assumption:orthonormal}
    \begin{equation*}
        H_j \in \iu \mathfrak{su}(d), \quad \RE\tr(H_iH_j) = \delta_{i,j}, \quad 1 \leq i,j \leq m
    \end{equation*}
\end{assumption}
\begin{assumption}[Lie algebra rank condition]\label{assumption:LARC}
    $$\operatorname{Lie}(\iu H_1, \ldots, \iu H_m) = \mathfrak{su}(d).$$
\end{assumption}
Let us comment on the second and third assumptions. In some applications the control Hamiltonians are related to the Pauli matrices which are orthogonal as in Assumption~\ref{assumption:orthonormal}. As such this assumption is not too restrictive. Together with the normalization condition, it is convenient but by no means necessary.

The third Assumption~\ref{assumption:LARC} is common in the practical coherent control of quantum systems because the control Hamiltonians often do not linearly span the set of tangent vectors to the special unitary group at the identity matrix. The Lie algebra rank condition is about iterated commutators of the control Hamiltonians. When it holds, combining the available control directions provides a family of tangent vectors that is sufficiently rich to move in any admissible direction. Intuitively this is because to second order in $t$ close to zero, $e^{-tA}e^{-tB}e^{tA}e^{tB} = I + t^2[A, B] + o(t^2)$. All in all Assumption~\ref{assumption:LARC} is meant to guarantee that any desired special unitary gate is reachable by means of an appropriate sequence of constant pulses.

\paragraph{Two optimal control problems.}
Let $P$ stand for the orthogonal projector onto the real linear span of $H_1, \ldots, H_m$ for the Frobenius (Hilbert-Schmidt) norm $|M| = \tr(MM^\dagger)^{1/2}$. In particular using Assumption~\ref{assumption:orthonormal}, 
\begin{equation}
    |P(M)|^2 = \sum_{j=1}^m \operatorname{Re}\tr(H_jM)^2.
\end{equation}
We denote by $|u|$ the euclidean norm of a vector $u \in \RR^m$. Let $\Omega_0$ be an amplitude fixed to 1 and $\Omega$ a total amplitude constraint, with $[\Omega] = [\Omega_0] = T^{-1}$. Consider the following two optimal control problems. 

\noindent\textbf{Problem A: free final time with amplitude constraint}.
\begin{align}\tag{$\mathscr{P}_u$}\label{ocp:u}
    \inf_{u \in L^\infty(0, T; \RR^m)} T 
\end{align}
subject to the pointwise amplitude constraint $|u(t)|\leq \Omega/\Omega_0$ and to the terminal constraint $\x(T) = \x_f$, where the propagator $\x$ solves the Cauchy problem 
\begin{equation}
    \x(0) = I,  \quad \dot{\x}(t) = -\iu \Omega_0 \sum_{j=1}^{m}u_j(t)H_j U(t), \quad 0 < t < T.
\end{equation}

\noindent\textbf{Problem B: initial covector over a fixed time horizon}.
\begin{align}\tag{$\mathscr{P}_{\z}$}\label{ocp:z}
    \inf_{\z \in \iu \mathfrak{su}(d)} \hnorm{P(\z)}
\end{align}
subject to the terminal constraint $\x(1) = \x_f$, where the propagator $\x$ solves the Cauchy problem\begin{equation}\label{eq:schrod}\x(0) = I, \quad \dot{\x}(s) = -\iu \Omega_0 \sum_{j=1}^{m}\RE \tr\left(\x(s)^\dagger H_j \x(s)\z\right)H_j U(s), \quad 0 < s < 1.
\end{equation}

Using Assumption~\ref{assumption:LARC} and rescaling, we can build an admissible control for Problem~\eqref{ocp:u} which steers the identity to $\target$ while satisfying the experimental constraint $|u(t)| \leq \Omega/\Omega_0$. This relies on the Chow-Rashevskii theorem or more specifically~\cite[Corollary 3.1]{albertini2001notions}. Moreover it can be shown that an optimal solution exists and that, using the maximization condition from Pontryagin's maximum principle (see e.g.~\cite{boscain2021introduction}), any \emph{normal} extremal is admissible for Problem~\eqref{ocp:z} with the physical time horizon $T$ and control $u$ given by
\begin{equation}\label{eq:equivalence}
        \Omega T = \hnorm{P(\z)}, \quad \text{ and } \quad u_j(t) =\frac{\Omega}{\Omega_0} \RE \tr \left( \x(s)^\dagger H_j \x(s)\frac{\z}{\hnorm{P(\z)}} \right), \quad \text{where} \quad \quad Ts = t.
\end{equation}
The time $T$ in the first problem corresponds in the second problem to the norm of the projection of $\z$ onto the space of controllable directions relative to the available control effort, that is $|P(\z)|/\Omega$. 

\begin{remark}[Restriction to normal extremals]\label{rem:abnormal_extremals}
    Not all Pontryagin extremal are normal and hence, it is \emph{not} true in general that every optimal control $u$ of Problem~\eqref{ocp:u} is of the form given by Equation~\eqref{eq:equivalence}. Depending on the choice of control Hamiltonians, one can prove that any abnormal extremal, which is not normal at the same time, cannot be optimal and hence can be disregarded. This occurs for instance in the $K+P$ setting. The proof given in~\cite[Appendix C]{boscain2002k+} relies on the Cartan decomposition of Lie groups and the Goh necessary conditions derived in~\cite{agrachev1996abnormal}.
\end{remark}

In Problem~\eqref{ocp:u} the constraint $|u(t)| \leq \Omega/\Omega_0$ physically reflects an upper bound on the power transmitted to the qudit and, mathematically, such a constraint ensures the existence of an optimal control. Indeed without this constraint and if $\x_f$ is non-trivial, given a control $u$ driven the system $\x$ from $\x(0) = I$ to $\x(T) = \x_f$ in time $T$, the more powerful control $u_n(t) = n u(t)$ performs the transfer in the shorter time $T_n = T/n$. Since $n$ can be arbitrarily large, the transfer can be performed in arbitrarily short time. However there is no control performing the transfer instantaneously with $T = 0$ because this would imply $\x(T) = \x(0)$ but $\x_f \neq I$ by assumption. In other words, the infimum is zero but is not attained and therefore not a minimum.

\begin{remark}[Notations]
    In the sequel, we work in units such that $\Omega_0 = \Omega = 1$ without loss of generality.
\end{remark}
\subsection{Differential of the end-point mapping and its adjoint operator} \label{sub:computed_gradient}
In the framework of Problem~\eqref{ocp:z}, the state $\x : [0, 1] \to SU(d)$ solves a non-linear equation with feedback control determined by $\z \in \iu \mathfrak{su}(d)$. More precisely we are interested in the Cauchy problem
\begin{align}\label{eq:cauchy}
    \x(0) = I, \quad \dot{\x} & = -\iu \sum_{j=1}^{m} \RE \tr(\x^\dagger H_j \x\z)H_j\x, \quad t \in (0, 1).
\end{align}
Let $E : \iu \mathfrak{su}(d) \to SU(d)$ denote the end-point mapping, defined by $E(\z) = \x(1)$ where $\x$ is the solution of~\eqref{eq:cauchy} with parameter $\z$. 
The derivative $E'$ of this mapping is needed to implement a first order optimization routine in the so-called optimize-then-discretize framework. We now focus on computing this derivative.

Equip the space of traceless Hermitian matrices $\iu \mathfrak{su}(d)$ and $\CC^{d \times d}$ with the inner-product $\langle a, b\rangle = \RE\tr(a b^\dagger)$. 
Then at any point $\z$, the derivative $E'(\z) : \iu \mathfrak{su}(d) \to \CC^{d \times d}$ and its adjoint $E'(\z)^\dagger : \CC^{d \times d} \to \iu \mathfrak{su}(d)$ are obtained as follows. As a shorthand notation, we write $h_j(t) = \x(t)^\dagger H_j \x(t)$.

\begin{lemma}[Jacobian-vector product]
    \label{lem:jvp}
    Given a perturbation $\delta \z \in \iu \mathfrak{su}(d)$, the resulting perturbation $\delta \x(1)$ of the end-point $\x(1)$ of the curve $\x(t)$ is given by $E'(\z)\delta \z = -\iu \x(1)w(1)$ where $w(t) \in \iu \mathfrak{su}(d)$ satisfies
    \begin{align}\label{eq:linearization}
        w(0) = 0, \quad
        \dot{w}(t) = \sum_{j=1}^{m} \RE \tr \left( h_j(t) \delta \z - 2 \iu \z h_j(t) w(t) \right)h_j(t), \quad 0 < t < 1
    \end{align}
\end{lemma}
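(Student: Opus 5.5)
The plan is to differentiate the Cauchy problem~\eqref{eq:cauchy} with respect to the parameter $\z$ and use a left-trivialized ansatz for the state perturbation. Since the vector field in~\eqref{eq:cauchy} is polynomial in $(\x,\x^\dagger,\z)$, it is smooth. Standard differentiability of ODE solutions with respect to parameters therefore guarantees that $\x(t)$ depends smoothly on $\z$. It also guarantees that $\delta\x(t) := \frac{\d}{\d\varepsilon}\x_{\z+\varepsilon\delta\z}(t)\big|_{\varepsilon=0}$ solves the linearized equation with $\delta\x(0)=0$, because $\x(0)=I$ does not depend on $\z$. Because $\x(t)\in SU(d)$, the tangent vector $\delta\x(t)$ lies in $\x(t)\,\mathfrak{su}(d)$. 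I will therefore write
\begin{equation*}
\delta\x(t) = -\iu\,\x(t)\,w(t), \qquad w(t)\in \iu\mathfrak{su}(d),
\end{equation*}
which defines $w$ uniquely, with $w(0)=0$. The claim $E'(\z)\delta\z=-\iu\x(1)w(1)$ is then immediate, and it remains to identify the equation satisfied by $w$.

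First set $c_j(t)=\RE\tr(h_j(t)\z)$, so that $\dot\x=-\iu\sum_j c_jH_j\x$. Linearizing gives $\delta\dot\x = -\iu\sum_j \delta c_j H_j\x - \iu\sum_j c_j H_j\,\delta\x$. On the other hand, differentiating the ansatz gives $\delta\dot\x=-\iu\dot\x w-\iu\x\dot w$. Substituting $\delta\x=-\iu\x w$ into the second linearized term turns it into $-\iu\dot\x w$, which cancels against the same term from the ansatz. What is left is $-\iu\x\dot w=-\iu\sum_j\delta c_jH_j\x$, that is,
\begin{equation*}
\dot w=\sum_j \delta c_j\,\x^\dagger H_j\x=\sum_j\delta c_j\, h_j .
\end{equation*}
This already shows that $\dot w$ is a real combination of traceless Hermitian matrices, so $w$ stays in $\iu\mathfrak{su}(d)$.

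Next I compute $\delta c_j$. From $\delta\x^\dagger=\iu w\x^\dagger$ we get $\delta h_j=\iu[w,h_j]$. Hence
\begin{equation*}
\delta c_j=\RE\tr(h_j\delta\z)+\RE\tr\bigl(\iu w h_j\z-\iu h_j w\z\bigr).
\end{equation*}
The main (though elementary) obstacle is matching this bracket with the term $-2\iu\z h_j w$ in~\eqref{eq:linearization}. I will use cyclicity together with the identity $\overline{\tr(ABC)}=\tr(CBA)$, valid for Hermitian $A,B,C$. Set $z=\tr(\z h_jw)$. Cyclicity gives $\tr(\iu h_j w\z)=\iu z$. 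For the other term, $\tr(wh_j\z)=\tr(h_j\z w)=\overline{\tr(w\z h_j)}=\overline{z}$, so $\tr(\iu wh_j\z)=\iu\bar z$. Therefore the bracket equals $\RE(\iu\bar z-\iu z)=2\,\mathrm{Im}\,z=\RE(-2\iu z)=\RE\tr(-2\iu\z h_j w)$. Combining this with the previous step yields~\eqref{eq:linearization}.
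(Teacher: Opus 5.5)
Your proof is correct and takes essentially the same route as the paper. Both differentiate the Cauchy problem and left-trivialize the variation via $w=\iu U^\dagger\delta U$ (equivalently $\delta U=-\iu Uw$), so that $-\iu\sum_j c_jH_j\,\delta U$ cancels against $-\iu\dot U w$ and leaves $\dot w=\sum_j \delta c_j\,h_j$; the real trace is then reduced by cyclicity and Hermitian conjugation. The only difference is bookkeeping: the paper first uses $\RE\tr(\delta U^\dagger H_jUM)=\RE\tr(U^\dagger H_j\delta U M)$, valid for any $\delta U$, and substitutes only afterwards, whereas you substitute first via $\delta h_j=\iu[w,h_j]$ and justify $w^\dagger=w$ from the tangent space of $SU(d)$, which also makes explicit the membership $w(t)\in\iu\mathfrak{su}(d)$ that the paper leaves implicit.
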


\begin{lemma}[Vector-jacobian product] \label{lem:vjp}
    Given a linear form $S \in \CC^{d \times d}$, the corresponding form $R = E'(\z)^\dagger S$ in $\iu \mathfrak{su}(d)$ is given by
    \begin{align}\label{eq:grad}
        R = \sum_{j=1}^{m} \int_{0}^{1}\RE \tr (h_j(t) v(t)^\dagger)h_j(t) \d t
    \end{align}
    where $v(t) \in \CC^{d \times d}$ satisfies
    \begin{align}\label{eq:adjoint_state}
        v(1) = \iu \x(1)^\dagger S, \quad \dot{v}(t) = -2 \iu\sum_{j=1}^{m}\RE \tr (h_j(t) v(t)^\dagger) h_j(t) \z, \quad 0 < t < 1
    \end{align}
\end{lemma}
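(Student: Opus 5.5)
The statement of Lemma~\ref{lem:vjp} is the adjoint of Lemma~\ref{lem:jvp}, so the plan is to derive it from the linearized system~\eqref{eq:linearization} with the standard adjoint-state duality argument. We use the inner product $\langle a,b\rangle=\RE\tr(ab^\dagger)$ throughout. The defining identity to check is
\begin{equation*}
\langle E'(\z)\delta\z, S\rangle = \langle \delta\z, R\rangle
\end{equation*}
for all $\delta\z\in\iu\mathfrak{su}(d)$. By Lemma~\ref{lem:jvp} the left-hand side is $\RE\tr(-\iu\x(1)w(1)S^\dagger)$. We rewrite it as $\RE\tr\bigl(w(1)(\iu\x(1)^\dagger S)^\dagger\bigr)=\langle w(1),v(1)\rangle$, using cyclicity and $(\iu\x(1)^\dagger S)^\dagger=-\iu S^\dagger\x(1)$. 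This explains the terminal condition $v(1)=\iu\x(1)^\dagger S$.

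Next we compute $\frac{d}{dt}\langle w(t),v(t)\rangle=\langle\dot w,v\rangle+\langle w,\dot v\rangle$ and integrate from $0$ to $1$; since $w(0)=0$, we get $\langle w(1),v(1)\rangle=\int_0^1(\langle\dot w,v\rangle+\langle w,\dot v\rangle)\,dt$. Inserting~\eqref{eq:linearization}, $\langle\dot w,v\rangle$ splits into two pieces:
\begin{itemize}
\item a source term $\sum_j\RE\tr(h_j\delta\z)\,\RE\tr(h_jv^\dagger)$;
\item a coupling term $\sum_j\RE\tr(-2\iu\z h_jw)\,\RE\tr(h_jv^\dagger)$.
\end{itemize}
The adjoint equation~\eqref{eq:adjoint_state} must make $\langle w,\dot v\rangle$ cancel the coupling term. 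With $\dot v=-2\iu\sum_j c_j h_j\z$, where $c_j=\RE\tr(h_jv^\dagger)$ is real, we get
\begin{equation*}
\langle w,\dot v\rangle=\sum_j c_j\RE\tr\bigl(w(-2\iu h_j\z)^\dagger\bigr)=\sum_j c_j\RE\tr(2\iu w\z h_j).
\end{equation*}
This uses that $h_j$ and $\z$ are Hermitian. By cyclicity, $\RE\tr(2\iu w\z h_j)=\RE\tr(2\iu\z h_jw)$, which is exactly the negative of the coupling coefficient $\RE\tr(-2\iu\z h_jw)$, so the two contributions cancel.

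Only the source term survives:
\begin{equation*}
\langle w(1),v(1)\rangle=\int_0^1\sum_j\RE\tr(h_j\delta\z)\,\RE\tr(h_jv^\dagger)\,dt=\Bigl\langle\delta\z,\sum_j\int_0^1\RE\tr(h_jv^\dagger)\,h_j\,dt\Bigr\rangle.
\end{equation*}
The last equality holds because $\RE\tr(h_j\delta\z)=\langle\delta\z,h_j\rangle$ for Hermitian $h_j$. This is $\langle\delta\z,R\rangle$, which proves the lemma.

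Two points need care. First, $R$ must lie in $\iu\mathfrak{su}(d)$: each $h_j=\x^\dagger H_j\x$ is traceless Hermitian and the coefficients are real, so this holds. Second, the main obstacle is bookkeeping signs and conjugates in the cancellation step, since $v$ is a general complex matrix while $w$, $h_j$ and $\z$ are Hermitian. I would first verify that $w(t)$ stays Hermitian, which follows because its derivative is a real combination of the $h_j$, and then check each trace identity separately.
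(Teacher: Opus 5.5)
Your proposal is correct and follows the same adjoint-state argument as the paper. You pair $v$ with $w$, apply the fundamental theorem of calculus using $w(0)=0$, substitute the linearized equation~\eqref{eq:linearization}, and choose $v$ via~\eqref{eq:adjoint_state} so the $w$-coupling term cancels, leaving only the $\delta\z$ source term from which $R$ is read off; your explicit checks of the terminal condition $v(1)=\iu\x(1)^\dagger S$, the sign cancellation, and $R\in\iu\mathfrak{su}(d)$ supply details the paper leaves implicit.
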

We postpone the proofs of these results to Section~\ref{sec:proof_jvp_vjp}. The calculation relies on the so-called adjoint-state method which we briefly outline and illustrate on a concrete example.

\subsubsection{The adjoint state method}~\label{app:adjoint_state_method}
 Calculating the cost function $J(\z)$ involves the implicit solution $\x(t)$, to an equation of the form $F(\x(\cdot), \z) = 0$ with parameter $\z$. Loosely speaking implicit means that finding the solution $\x$ given $\z$ is a challenging computational task for which convenient closed form expressions cannot be exploited. In particular finding the gradient of $J$ is challenging because it is not given directly by a formula. The method is often a computationally efficient way to overcome this issue. At the core of the adjoint-state method lies the implicit function theorem and duality.

\paragraph{Method outline.}
Suppose that the vector parameter $\z$ has $n$ real coordinates and that the cost function is $$J(\z) = \ell(x(\z))$$ where $x(\z) \in \RR^N$ is defined as the unique solution to an implicit equation $F(x(\z), g) = 0$.
Then in response to a perturbation $\delta \z$ of the parameter $\z$, the first variation of $J$ is
\begin{equation}\label{eq:intermediate}\delta J = \left.\frac{\mathrm{d}}{\mathrm{d}\varepsilon} J(\z + \varepsilon \delta \z)\right|_{\varepsilon = 0} = \langle \nabla \ell(x(\z)), \delta x \rangle\end{equation}
where $\delta x$ is the first variation of $x(\z)$ given by
\begin{equation}\delta x = \left.\frac{\mathrm{d}}{\mathrm{d}\varepsilon} x(\z + \varepsilon \delta \z)\right|_{\varepsilon = 0}\end{equation}
We are evaluating a linear form on the implicit solution to a linear equation depending on $\delta \z$.
To relate $\delta x$ to $\delta \z$ we match the first order variations in the equation $F(x(\z), \z) = 0$. This implies that $\delta x$ solves a linear system of equation 
\begin{equation}\label{eq:adjoint_eq}A \delta x + B \delta \z = 0\end{equation}
where $A = \partial_x F(x(\z), \z)$ and $B = \partial_{\z} F(x(\z), \z)$ are the jacobians of $F$ with respect to each arguments. 
The goal is to rewrite Equation~\eqref{eq:intermediate} in the form $\delta J = \langle h, \delta \z \rangle$ where the vector $h \in \RR^n$ will be the gradient we seek. This is what enables the introduction of an adjoint state $v \in \RR^N$ solution to the adjoint linear system 
$$A^\dagger v = \nabla \ell(x(\z))$$
Because using Equation~\eqref{eq:adjoint_eq} the first variation of $J$ rewrites 
$$\delta J = \langle x(\z), \delta x \rangle = \langle A^\dagger v, \delta x \rangle = \langle v, A\delta x \rangle = \langle v, -B\delta \z \rangle$$
which means that the gradient of $J$ is the vector $$\nabla J(\z) = -B^\dagger v$$ which depends on $\z$ via $B$ and the adjoint state $v$.

\paragraph{Concrete example.} For example put $n = N = 2$. Define $\ell(x) = |x|^2/2$, the squared euclidean norm of $x$ and $$F(x, \z) = \begin{pmatrix}\arctan(x_1) - \z_1 \\ x_2 - (x_1 + \z_2)^2\end{pmatrix}.$$ If $|\z_1| < \pi/2$ then the vector $x(\z)$ solution to $F(x, \z) = 0$ is given by \begin{equation*}
    x(\z) = \begin{pmatrix}
		\tan \z_1 \\ (\tan \z_1 + \z_2)^2
	\end{pmatrix} 
\end{equation*} 
This means that $J(\z) = \tan(\z_1)^2/2 + (\tan(\z_1) + \z_2)^4/2$ and so
\begin{align}\label{eq:gradient_usual}
    \partial_{g_1} J(\z) &= 
    (1 + \tan(\z_1)^2)\left(\tan(\z_1) + 2(\tan(\z_1) + \z_2)^3\right), \\ 
	\partial_{g_2} J(\z) &= 2(\tan(\z_1) + \z_2)^3.
\end{align}
Let us obtain the same expression for $\nabla J(\z)$ using the adjoint state method described above. With the shorthand notation $y = x(\z)$, the coordinates of $y$ are $y_1 = \tan \z_1$ and $y_2 = \tan \z_1 + \z_2$ and so we have
\begin{equation*}
    \nabla \ell (y) = y, \quad A = \begin{pmatrix}
	(1 + y_1^2)^{-1} & 0 \\ -2(y_1 + \z_2) & 1
\end{pmatrix}, \quad B = \begin{pmatrix}
	-1 & 0 \\ 0 & -2(y_1 + \z_2)
\end{pmatrix}
\end{equation*}
Therefore the adjoint state $v$ has coordinates $v_1 =(y_1 + 2y_1y_2 + 2\z_2y_2)(1 + y_1^2)$ and $v_2 = y_2$. Accordingly the gradient $\nabla J(\z)$ should be given by
\begin{equation}\label{eq:adjoint_gradient}
    \nabla J(\z) = \begin{pmatrix}
    v_1 \\
	2(y_1 + \z_2)v_2
\end{pmatrix}
\end{equation}
and it is straightforward to verify that the vector in Eq.~\eqref{eq:adjoint_gradient} obtained via the adjoint state method is, as expected, equal to the correct gradient whose coordinates are given by Eq.~\eqref{eq:gradient_usual}.

\subsubsection{Proofs of the JVP and VJP expressions}
\label{sec:proof_jvp_vjp}
\begin{proof}[Proof of Lemma~\ref{lem:jvp} (JVP)]\label{proof:jvp}
    We denote by $\delta U(t)$ the first order variation of the trajectory $U(t)$ solution to~\eqref{eq:cauchy} in response to a perturbation $\delta \z$ of $\z$
    \begin{equation}\delta U(t) = \left.\frac{\mathrm{d}}{\mathrm{d}\varepsilon} U_{\z + \varepsilon \delta \z}(t)\right|_{\varepsilon = 0}\end{equation}
    By definition $E'(\z)\delta \z = \delta U(1)$.
    In Equation~\eqref{eq:cauchy}, keep in mind that $U$ itself depends on $\z$. Then differentiating with respect to $\z$ the initial condition and both sides of the differential equation yields that the first variation of $U$
    is the solution starting at $\delta U(0) = 0$ to $\dot{\delta U} = -\iu U(A + B)$ where 
    \begin{align}
        A &= \sum_{j=1}^{m}\RE \tr\left((\delta U)^\dagger H_j U \z + U^\dagger H_j \delta U \z + h_j\delta \z\right)h_j, \\ 
        B &= U^\dagger H \delta U.
    \end{align}
    We can simplify the $A$ term by observing that $\RE \tr ((\delta U)^\dagger H_j U\z) = \RE \tr (U^\dagger H_j \delta U\z)$, which means that the two first contributions from the left are equal. Then multiplying $\delta U$ by the inverse propagator $U^\dagger$ will cancel the contribution of the $B$ term. Indeed if $w(t) = \iu U(t)^\dagger \delta U(t)$ then $\dot{w} = \iu \dot{U}^\dagger \delta U + (A + B)$ and, given the differential equation on $\x$, the first term simplifies to $\iu \dot{\x}^\dagger \delta \x = -B$. Altogether we see that $\dot{w} = A$.
    Substituting $\delta U = -\iu U w$ in the $A$ terms yields the Equation~\eqref{eq:linearization} on $w$ because $w(0) =0$ since $\delta U(0) = 0$.
\end{proof}

\begin{proof}[Proof of Lemma~\ref{lem:vjp} (VJP)]\label{proof:vjp}
    The aim is to find the Hermitian traceless matrix $R$ such that 
    \begin{equation}\label{eq:def_adjoint}
        \RE \tr \left(S^\dagger \delta U(1)\right) = \RE \tr \left(R\delta \z\right)
    \end{equation}
    for any perturbation $\delta \z$ of $\z$.
    Define $w = \iu U \delta U$ as in Lemma~\ref{lem:jvp}. Observe that for any function $v$
    \begin{align}\label{eq:ftc}
        \RE \tr \left(v(1)^\dagger w(1)\right) = \int_{0}^{1}\RE \tr \left(\dot{v}^\dagger w + v^\dagger \dot{w}\right)\d t
    \end{align}
    by the fundamental theorem of calculus. Using the differential equation on $w$ lets us rewrite Equation~\eqref{eq:ftc} as
    \begin{align}
        \RE \tr \left(v(1)^\dagger w(1)\right) &- \int_{0}^{1}\RE \tr \left( (\dot{v} + 2\iu \sum_{j=1}^{m}\RE \tr(v^\dagger h_j)h_j\z)^\dagger w\right) \d t\\ = & \int_0^1 \RE \tr \left(\sum_{j=1}^{m}\RE \tr(v^\dagger h_j)h_j \delta \z \right)\d t 
    \end{align}
    Notice the right-hand side only involves the perturbations $\delta \z$ but not $w$. This means that choosing $v$ as the solution to Equation~\eqref{eq:adjoint_state}, the left-hand side becomes equal to the left-hand side of~\eqref{eq:def_adjoint}. Then choosing $R$ as in Eq.~\eqref{eq:grad} gives~\eqref{eq:adjoint_state} and concludes the proof.
\end{proof}

\subsection{Optimization routine: natural gradient descent} \label{sub:natural_gradient}
Consider Problem~\eqref{ocp:z}.
Instead of optimizing $|P(\z)|$ subject to the constraint $E(\z) = \target$, we simplify the problem by optimizing a functional $J(\z) = \ell(E(\z))$ penalizing the discrepancy to the target $\target$. In other words, we look for a $\z$ that is admissible for Problem~\eqref{ocp:z} but need not be time-optimal. However, this is done using a so-called natural gradient descent~(see e.g.~\cite{nurbekyan2023efficient}), a variant of the Gauss-Newton method for optimization. Its update rule $\z \mapsto \hat{\z}$ reads 
\begin{equation}\label{eq:natgrad}
    \hat{\z} = \z  -\rho \left[E'\left(\z\right)^\dagger E'\left(\z\right) + \alpha I\right]^{-1}\nabla J\left(\z\right), \quad \text{where} \quad \nabla J\left(\z\right) = E'(\z)^\dagger \nabla \ell(E(\z)).
\end{equation} 
In Section~\ref{sec:natgrad_details} below we provide different ways to gain intuition on this update rule.
Via the Tikhonov regularization parameter $\alpha$, this method allows for a natural way to penalize the variation of the Frobenius (Hilbert-Schmidt) norm of $\z$ and hence of the time $T = |P(\z)|$. 
Put differently, this means that we do not optimize the time directly but rather attempt to find an admissible point $\z$ for which $|P(\z)|$ is just as large as it needs to be for reaching the target.

We use an optimize-then-discretize approach, also known as \emph{indirect} method. This means that we first compute a descent direction such as $-\nabla J(\z)$ and only then introduce a numerical scheme to discretize the equations of motion. This is contrast to the use of an automatic differentiation engine which is part of the discretize-then-optimize framework, also known as \emph{direct} method.

A proof of concept written in Julia is available on the GitHub repository \href{https://github.com/killianlutz/Magicarp}{\texttt{Magicarp}}. Alternatively a Python code relying on JAX and the direct method is available on the repository \href{https://github.com/killianlutz/pyMagicarp}{\texttt{pyMagicarp}}.

Using a standard gradient descent with line search, infidelity plateaus are observed when the qudit dimension becomes large, say $d \geq 8$. This means that regardless of the number of gradient iterations, the value of $J$ remains above a "too large" positive value even though the minimal value of $J$ is known to be zero.
We resort to natural gradient descent to overcome this issue. This method has proven to be more efficient in practice. This can be understood since it incorporates second order information on the cost functional $J$.

More concretely, the hessian $J''(\xi)$ of the composition $J(\xi) = \ell(E(\xi))$ reads $J''(\xi) = S + R$ where $S = E'(\xi)^\dagger\ell''(E(\xi))E'(\xi)$ and $R$ involves $E''(\xi)$ and $\nabla \ell(\xi)$. For convex $\ell$, the matrix $S$ is positive semi-definite while the matrix $R$ need not have a "sign". In a way, $R$ is not as well behaved as $S$ for optimization purposes. In a Gauss-Newton method, $R$ is discarded and the matrix $S$ is used. In a (non-regularized) natural gradient method, $R$ is discarded and the matrix $S$ is used as if the hessian of $\ell$ was constant equal to the identity matrix.

\subsubsection{Principle of natural gradient descent}
Up to choosing an orthonormal basis $(\z_i)_{1 \leq i \leq n}$ of $\iu \mathfrak{su}(d)$, we may identify $\z$ with its coordinates $\xi_i = \langle \z, \z_i \rangle$. Accordingly we may view $E$ as a map from $\RR^n \to \CC^{d \times d}$.

Let $J : \RR^n \to \RR$ be the objective function. Assume that it splits into the composition of $E$ with a convex function $\ell : \CC^{d \times d} \to \RR$, so that $J = \ell \circ E$.
In practice $\ell(\x) = |\x - \target|^2$ or the infidelity $\ell(\x) = 1 - |\operatorname{Tr}(\x\target^\dagger)|/d$. Although the latter is a natural choice, it is \emph{not} convex. 

\paragraph{Main idea.}\label{sec:natgrad_details} The intuition behind splitting $J$ as $\ell$ composed with the model or end-point mapping $E$ is the following: when $\ell$ is convex, optimizing in the $\x$ space is in general far easier than optimizing $J$ directly in the $\z$ space due to non-convexity. 
To exploit this idea, the natural gradient step takes into account the model parametrization $\z \mapsto \x(1) = E(\z)$. It projects $-\nabla \ell(\x(1))$, the "good" descent direction in $\x$ space, onto the range of the linearization of $E$ at $\z$. Roughly speaking, the range of $E'(\z)$ describes the directions of motion locally available around $\x(1) = E(\z)$ within the state set $S = \lbrace E(\z), \, \z \in \iu \mathfrak{su}(d) \rbrace$. This projection can be seen as preconditioning $\nabla J(\z)$, the "bad" gradient in $\z$ space, by the regularized Gram matrix $E'(\z)^\dagger E'(\z)$ which describes how the parameters or degrees of freedom $\z$ offer an "independent" description of the state set $S$ around the state $E(\z)$.

\paragraph{Details.}
Fix a penalty $\alpha > 0$.
Given a point $\xi$ with associated trajectory $\x$, we update $\xi$ by a step in the direction $\mu$ with step-size $\rho$
\begin{equation}
    \delta \xi = \rho \frac{\mu}{\hnorm{\mu}}.
\end{equation}
The descent direction $\mu$ is determined as the solution to the Tikhonov-regularized least-squares problem
\begin{equation}
    \operatorname{\inf}_{\mu \in \RR^n} \hnorm{E'(\xi)\mu + \nabla \ell(E(\xi))}^2 + \alpha \hnorm{\mu}^2
\end{equation}
for which the normal equations are
\begin{equation}
    \left[E'(\xi)^\dagger E'(\xi) + \alpha I \right]\mu = -\nabla J(\xi).
\end{equation}
because $\nabla J(\xi) = E'(\xi)^\dagger \nabla \ell(E(\xi))$.
Given the solution $\mu$,the step-size $\rho$ minimizes on $(0, +\infty)$ the map
\begin{equation}
    \rho \mapsto J\left(\xi + \rho \frac{\mu}{\hnorm{\mu}}\right)
\end{equation}

\begin{remark}
    If $\xi$ is not a critical point of $J$, that is $\nabla J(\xi) \neq 0$ then any step $\mu$ defined by a symmetric semi-positive definite system of linear equation $A \mu = -\nabla J(\xi)$ is a descent direction for $J$. That is for such a $\mu$, $J(\xi + \varepsilon \mu) - J(\xi) = -\varepsilon \langle A \mu, \mu \rangle + o(\varepsilon)$ is negative for small enough $\varepsilon$. In particular this holds for any $\alpha \geq 0$ when $A = E'(\xi)^\dagger E'(\xi) + \alpha I$.
\end{remark}

\subsubsection{Implementation details.}
\paragraph{Algorithm to calculate JVP: $E'(\z)\delta \z$.}
Given $\z$, we solve forward from $t = 0$ to $t = 1$ the augmented system $(\x, \delta \x)$ starting at $(I, 0)$ where $\delta \x(t) = -\iu \x(t)w(t)$ is the linearization of the trajectory $\x$ in the direction $\delta \z$ and $w$ is defined by Equation~\eqref{eq:linearization}. The value at time $t = 1$ of $\delta \x$ is the desired quantity $E'(\z)\delta \z$.

\paragraph{Algorithm to calculate VJP: $E'(\z)^\dagger S$.}
Given $\z$, we solve for the terminal state $\x(1)$ and evolve \emph{backward} from $t=1$ to $t=0$ the augmented system $x = (\x, v, z)$ starting at $x_1(1) = \x(1)$, $x_2(1) = \iu x_1^\dagger S$ and $x_3(1) = 0$. The equation of motion for $x_1$ and $x_2 = v$ are respectively given by~\eqref{eq:cauchy} and \eqref{eq:adjoint_state} with minus signs to account for backward motion, and that of $x_3 = z$ is $\dot{z} = -\sum_{j=1}^{m}\RE \tr(h_jv)h_j$. In this way the value at time $t = 0$ of $z$ is the desired quantity $R = E'(\z)^\dagger S$.

\paragraph{Algorithm to calculate $\nabla J(\z)$:} Given $\z$, apply (ii) with $S = \nabla \ell(E(\z))$.

\paragraph{Algorithm to calculate combined VJP and JVP: $E'(\z)^\dagger E'(z)\delta \z$.}
Find the update step of the natural gradient descent requires solving a symmetric positive-definite linear system involving the gram matrix of $E'(\z)$ regularized by $\alpha I$. Since $\alpha > 0$ and to avoid creating the large non-sparse matrix $E'(\z)$ of order scaling like $d^2$, we appeal to matrix-free methods of Krylov-type. For instance GMRES~\cite{saad1986gmres,zou2023gmres}.

Given $\delta \z$, we apply (a) to get $S := E'(\z)\delta \z$ and then (b) to obtain $R = E'(\z)^\dagger S$. Then $R + \alpha \delta \z$ is the desired image of $\delta \z$ under the linear map defining the normal equations. The step-size $\rho$ is calculated using the golden-section method, a gradient-free optimization method for unimodal functions~\cite{press2007numerical}. This assumption is usually not verified but the method empirically works decently well in our setting.

\paragraph{Initialization.} 
Whenever $\z^{(0)}$ is orthogonal to the linear span of the control Hamiltonians, it is an equilibrium point of the optimization procedure. In particular for some targets $\target$, the reasonable initial guess $\z^{(0)} = \iu \log(\target)$ could be one such equilibrium point. 

Accordingly a random initial-guess is preferable without further a priori knowledge.
For the purpose of minimizing $T = |P(\z)|$ subject to $E(\z) = \target$, it is additionally preferable for $|P(\z^{(0)})|$ to be relatively "small" and to choose the penalty $\alpha$ quite large. 

Typically $\alpha = 10^{-3}$ is a robust choice with respect to the choice of target $\target$ and dimension $d$. If $\alpha$ is larger, we expect $|P(\z)|$ to be smaller when the algorithm stops, but this may require a lot more additional iterations.

\paragraph{Validation.} Our stopping criterion is based on the value of the objective function and of the form $J(\z) \leq \tau$ where $\epsilon$ is a given absolute tolerance. Once the algorithm stops, we validate the candidate numerical solution $\z$ using an independent and high-fidelity ODE solver to compute the $\x(1)$ associated to this $\z$. We evaluate the corresponding discrepancy to the target $J_{\rm val}(\z)$ and check whether it is indeed less then or equal to $\tau$. Typically $\tau = 10^{-4}$ if $\ell$ is the infidelity $\ell(x) = 1 - |\operatorname{Tr}(\x\target^\dagger)|/d$.

\paragraph{(vii) Mesh-refinement.} Given a uniform grid of the interval $[0, 1]$ with stepsize $\delta t$, we optimize until the stopping criterion $J(\z) \leq \tau$ is met. If the result is not validated in the sense of (f), that is $J_{\rm val}(\z) > \tau$, then we refine the mesh. For instance by a factor two $\delta t_{\rm new} = \delta t/2$. Keeping $\z$ as initial guess, we then repeat the previous steps until $J_{\rm val} \leq \tau$ or the number of iterations exceeds a pre-defined threshold. This can be seen as a continuation method over the mesh-size.

\section{Population dynamics for a QFT on the triple decker}~\label{app:triple_decker_dynamics}

\begin{figure}[!htbp]
    \centering
    \includegraphics[width=\linewidth]{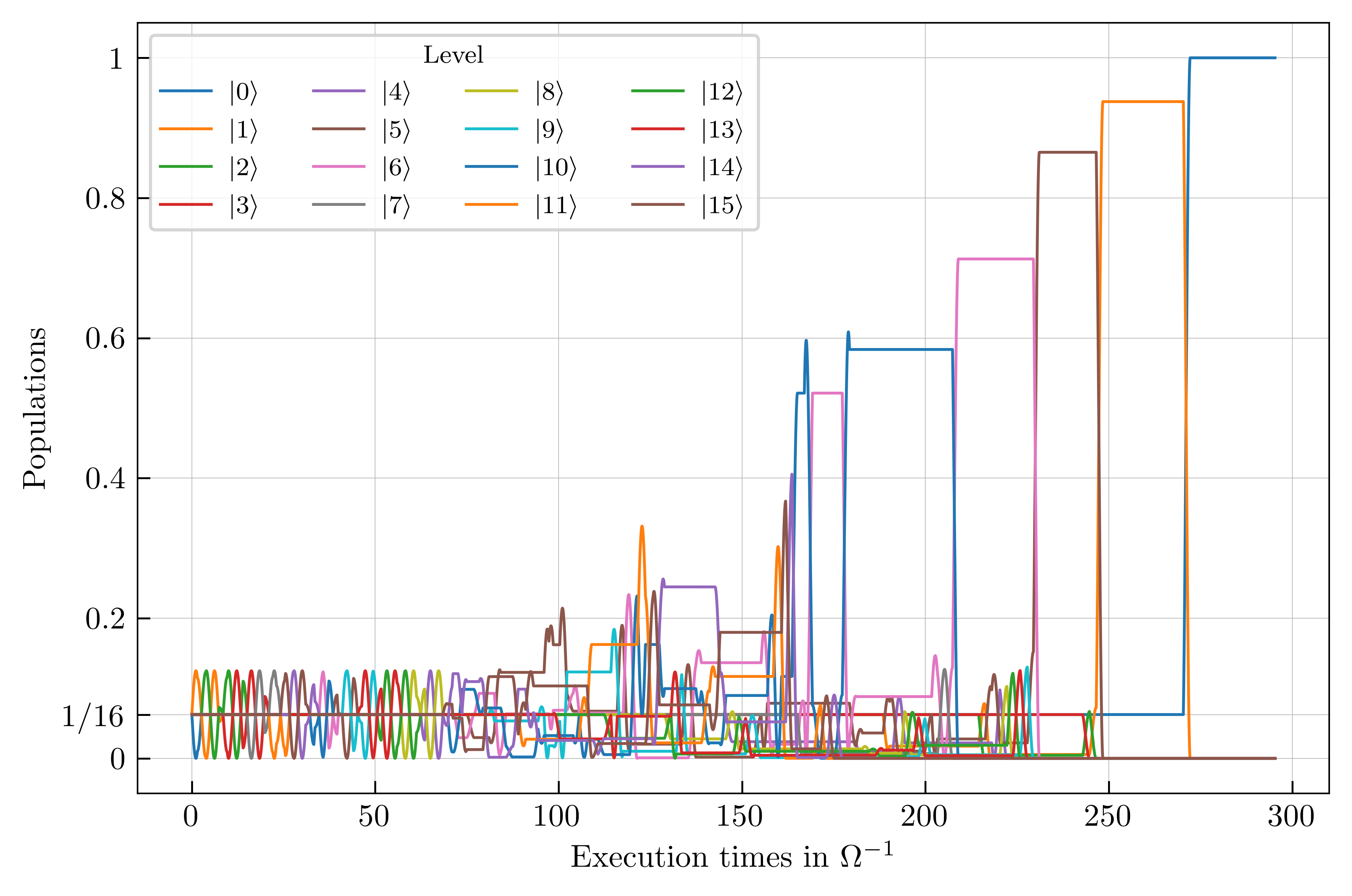}
    \caption{Time evolution of the energy levels for the triple decker subject to the control pulses in~\cref{fig:controls} which were calculated using the GRD to implement the QFT. The initial state is a superposition of all the basis states with equal weights and so it ends-up in the state $\ket{0}$.}\label{fig:pop_dynamics}
\end{figure}
\begin{figure}[!htbp]
    \centering
    \includegraphics[height=0.9\textheight]{imgs/Grover_pulses/controls_with_thumbnails.png}
    \caption{Time evolution of the controls calculated using the GRD to implement the QFT on the triple decker.\ Each control correspond to an addressable energy difference which are indicated by edges in the coupling graph of~\cref{fig:graph}. The dynamics of the populations are shown in~\cref{fig:pop_dynamics}. Notice that some controls are not used. This reflects the non-unique arbitrary path taken to reach each nodes, similar to the Travelling Salesman problem.}\label{fig:controls}
\end{figure}

\clearpage
\newpage
\bibliography{refs,biblio_math}
\end{document}